\numberwithin{equation}{section} \theoremstyle{plain}
\theoremstyle{plain}
\newtheorem{Thm}{Theorem}
\newtheorem{Cor}[subsection]{Corollary}
\newtheorem{Lem}[subsection]{Lemma}
\newtheorem{Prop}[subsection]{Proposition}
\theoremstyle{definition}
\newtheorem{Def}[subsection]{Definition}
\theoremstyle{remark}
\newtheorem{rem}[subsection]{Remark}
\theoremstyle{example}
\newtheorem{ex}[subsection]{Example}
\newenvironment{thm}%
          { \begin{Thm}  }%
          { \end{Thm} }
\newenvironment{lem}%
          { \begin{Lem}    }%
          { \end{Lem} }
\newcommand{\vcomp}{C_{c}(V)}
\newcommand{\lw}{{\ell^{2}_{\mu}}(V)}
\newcommand{\delswa}{\Delta_{b,\mu;\theta}}
\newcommand{\hmax}{H_{\max}}
\newcommand{\hmin}{H_{\min}}
\newcommand{\Dom}{\operatorname{Dom}}
\title{Self-adjoint extensions of discrete magnetic Schr\"odinger operators}
\author{Ognjen Milatovic, Fran{\c c}oise Truc}
\address{Department of Mathematics
and Statistics \\ University of North Florida \\ Jacksonville, FL
32224 \\ USA.}
\email{omilatov@unf.edu}
\address{Grenoble University\\ Institut Fourier\\
Unit{\'e} mixte
 de recherche CNRS-UJF 5582\\
 BP 74, 38402-Saint Martin d'H\`eres Cedex, France.}
\email{francoise.truc@ujf-grenoble.fr}
\subjclass[2000]{35J10, 39A12, 47B25}
\begin{document}
\maketitle

\begin{abstract} Using the concept of intrinsic metric on a locally finite weighted graph, we give sufficient conditions for the magnetic Schr\"odinger operator to be essentially self-adjoint. The present paper is an extension of some recent results proven in the context of graphs of bounded degree.
\end{abstract}
\section{Introduction and the main results}\label{S:main}
\subsection{The setting}\label{SS:setting}
Let $V$ be a countably infinite set. We assume that $V$ is equipped with a measure $\mu\colon V\to (0,\infty)$.
Let $b\colon V\times V\to[0,\infty)$ be a function such that

\medskip

\noindent (i) $b (x, y) = b (y, x)$, for all $x,\,y\in V$;

\medskip

\noindent (ii) $b(x,x)=0$, for all $x\in V$;

\medskip

\noindent (iii)  $\textrm{deg}(x):=\displaystyle\sharp\,\{y\in V\colon b(x,y)>0\}<\infty$, for all $x\in V$. Here, $\sharp\, S$ denotes the number of elements in the set $S$.

\medskip

Vertices $x,\, y\in V$ with $b(x, y) > 0$ are called \emph{neighbors}, and we denote this relationship by $x\sim y$. We call the triple $(V,b,\mu)$ a \emph{locally finite weighted graph}. We assume that $(V,b,\mu)$ is connected, that is, for any $x,\,y\in V$ there exists a path $\gamma$ joining $x$ and $y$. Here, $\gamma$ is a sequence $x_0,\,x_2,\,\dots,x_n\in V$ such that $x=x_0$, $y=x_n$, and $x_{j}\sim x_{j+1}$ for all $0\leq j\leq n-1$.




\subsection{Intrinsic metric}\label{SS:intrinsic-metric} Following~\cite{HKMW} we define a \emph{pseudo metric} to be a map $d\colon V\times V\to [0,\infty)$  such that $d(x,y)=d(y,x)$, for all $x,\,y\in V$; $d(x,x)=0$, for all $x\in V$; and $d(x,y)$ satisfies the triangle inequality. A pseudo-metric $d=d_{\sigma}$ is called a path pseudo-metric if there exists a map $\sigma\colon V\times V\to [0,\infty)$ such that $\sigma(x,y)=\sigma(y,x)$, for all $x,\,y\in V$; $\sigma(x,y)>0$ if and only if $x\sim y$; and
\[
d_{\sigma}=\inf\{l_{\sigma}(\gamma)\colon \gamma=(x_0,x_1,\dots,x_n), n\geq 1, \textrm{is a path connecting }x\textrm{ and }y\},
\]
where the length $l_{\sigma}$ of the path $\gamma=(x_0,x_1,\dots,x_n)$ is given by
\begin{equation}\label{E:l-sigma-def}
l_{\sigma}(\gamma)=\sum_{i=0}^{n-1}\sigma(x_i,x_{i+1}).
\end{equation}

As in~\cite{HKMW} we make the following definitions.

\begin{Def}\label{D:intrinsic} \noindent (i) A pseudo metric $d$ on $(V,b,\mu)$ is called \emph{intrinsic} if
\[
\frac{1}{\mu(x)}\sum_{y\in V}b(x,y)(d(x,y))^2\leq 1,\qquad\textrm{for all }x\in V.
\]
\noindent (ii) An intrinsic path pseudo metric $d=d_{\sigma}$ on $(V,b,\mu)$ is called \emph{strongly intrinsic} if
\[
\frac{1}{\mu(x)}\sum_{y\in V}b(x,y)(\sigma(x,y))^2\leq 1,\qquad\textrm{for all }x\in V.
\]
\end{Def}
\begin{rem} On a locally finite graph $(V,b,\mu)$, the formula
\begin{equation}\label{E:ex-intrinsic}
\sigma_{1}(x,y)=b(x,y)^{-1/2}\min\left\{\frac{\mu(x)}{\textrm{deg}(x)},\frac{\mu(y)}{\textrm{deg}(y)}\right\}^{1/2},\qquad\textrm{with }x\sim y,
\end{equation}
where $\textrm{deg}(x)$ is as in property (iii) of $b(x,y)$, defines a strongly intrinsic path metric; see~\cite[Example 2.1]{HKMW}.
\end{rem}

\subsection{Cauchy boundary} For a path metric $d=d_{\sigma}$ on $V$, we denote the metric completion by $(\widehat{V},\widehat{d})$. As in~\cite{HKMW} we define the \emph{Cauchy boundary} $\partial_{C}V$ as follows: $\partial_{C}V:=\widehat{V}\backslash V$. Note that $(V,d)$ is metrically complete if and only if $\partial_{C}V$ is empty. For a path metric $d=d_{\sigma}$ on $V$ and $x\in V$, we define
\begin{equation}\label{E:dist-boundary}
D(x):=\inf_{z\in \partial_{C}V}d_{\sigma}(x,z).
\end{equation}

\subsection{Inner product} In what follows, $C(V)$ is the set of complex-valued functions on $V$, and $\vcomp$ is the set of finitely supported elements of $C(V)$.
By $\lw$  we denote the space of functions $f\in C(V)$ such that
\begin{equation}\label{E:l-p-def}
\|f\|^{2}:=\sum_{x\in V}\mu(x)|f(x)|^2<\infty,
\end{equation}
where $|\cdot|$ denotes the modulus of a complex number.

In particular, the space $\lw$ is a Hilbert space with the inner product
\begin{equation}\label{E:inner-w}
(f,g):=\sum_{x\in V}\mu(x)f(x)\overline{g(x)}.
\end{equation}


\subsection{Laplacian operator}\label{SS:laplacian-def}
We define the formal Laplacian $\Delta_{b,\mu}\colon C(V)\to C(V)$ on $(V,b, \mu)$ by the formula
\begin{equation}\label{E:ord-lap}
    (\Delta_{b,\mu} u)(x)=\frac{1}{\mu(x)}\sum_{y\in V}b(x,y)(u(x)-u(y)).
\end{equation}
\subsection{Magnetic Schr\"odinger operator}\label{SS:magnetic-schro}
We fix a phase function $\theta \colon V\times V\to [-\pi,\pi]$ such that $\theta(x,y)=-\theta(y,x)$ for all $x\,,y\in V$, and denote $\theta_{x,y}:=\theta(x,y)$.
We define the formal magnetic Laplacian $\delswa\colon C(V)\to C(V)$ on $(V,b,\mu)$ by the formula
\begin{equation}\label{E:magnetic-lap}
    (\delswa u)(x)=\frac{1}{\mu(x)}\sum_{y\in V}b(x,y)(u(x)-e^{i\theta_{x,y}}u(y)).
\end{equation}

We define the formal magnetic Schr\"odinger operator $H\colon C(V)\to C(V)$ by the formula
\begin{equation}\label{E:magnetic-schro}
Hu:= \delswa u +Wu,
\end{equation}
where $W\colon V\to \mathbb{R}$.

\subsection{Statements of the results}
We are ready to state our first result.
\begin{thm}\label{T:main-1}  Assume that $(V, b, \mu)$ is a locally finite, weighted, and connected graph. Let $d=d_{\sigma}$ be an intrinsic path metric on $V$ such that $(V,d)$ is not metrically complete. Assume that there exists a constant $C$ such that
\begin{equation}\label{E:potential-minorant}
W(x) \geq\frac{1}{2(D(x))^2}-C,\quad \textrm{ for all }x\in V,
\end{equation}
where $D(x)$ is as in~(\ref{E:dist-boundary}).
Then $H$ is essentially self-adjoint on $C_c(V)$.
\end{thm}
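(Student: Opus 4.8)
The plan is to show that the symmetric operator $H$ on $\vcomp$ has trivial deficiency spaces. First I would record that $H\ge -C$ on $\vcomp$: writing the magnetic Dirichlet form $Q_\theta(v)=\tfrac12\sum_{x,y}b(x,y)|v(x)-e^{i\theta_{x,y}}v(y)|^2$, one has $(\delswa u,u)=Q_\theta(u)\ge 0$, while $(Wu,u)\ge -C\|u\|^2$ because $W\ge \tfrac{1}{2D^2}-C\ge -C$. Hence $H$ is bounded below, and (its closure being symmetric with $\hmax=(H\restriction_{\vcomp})^*$ acting by the formal expression \eqref{E:magnetic-schro} on $\{u\in\lw\colon Hu\in\lw\}$, by local finiteness) essential self-adjointness is equivalent to $\ker(\hmax+\lambda)=\{0\}$ for one, equivalently every, $\lambda>C$. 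So I fix $\lambda>C$ and take $u\in\lw$ with $Hu=-\lambda u$ pointwise; the goal is to prove $u\equiv 0$.

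The engine of the proof is a localization (discrete magnetic IMS) identity. For real-valued $\chi\in\vcomp$, summation by parts gives
\[
\operatorname{Re}(Hu,\chi^2 u)\ \ge\ Q_\theta(\chi u)+\sum_{x\in V}\mu(x)W(x)\chi(x)^2|u(x)|^2-\tfrac12\sum_{x\in V}\mu(x)\Gamma(\chi)(x)|u(x)|^2,
\]
where $\Gamma(\chi)(x):=\tfrac{1}{\mu(x)}\sum_{y\in V}b(x,y)(\chi(x)-\chi(y))^2$ is the energy of the cutoff. The decisive point, which lets the magnetic case behave like the non-magnetic one, is that the phase $e^{i\theta_{x,y}}$ cancels in this correction term and survives only inside the nonnegative $Q_\theta(\chi u)$ and inside moduli, by a diamagnetic-type estimate. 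Since $\chi^2 u\in\vcomp$ and $Hu=-\lambda u$, the left-hand side equals $-\lambda\|\chi u\|^2$; dropping $Q_\theta(\chi u)\ge0$ and rearranging yields
\[
\sum_{x\in V}\mu(x)\bigl(W(x)+\lambda\bigr)\chi(x)^2|u(x)|^2\ \le\ \tfrac12\sum_{x\in V}\mu(x)\Gamma(\chi)(x)|u(x)|^2.
\]

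Next I would feed in the geometry through $\chi=g\circ D$. Since $D$ is $1$-Lipschitz for $d$ and $d$ is intrinsic, $\Gamma(\chi)(x)\le (\operatorname{Lip} g)^2\,\tfrac{1}{\mu(x)}\sum_y b(x,y)d(x,y)^2\le(\operatorname{Lip} g)^2$, the Lipschitz constant measured between $D(x)$ and the values at its neighbors. Combined with $W+\lambda\ge \tfrac{1}{2D^2}+(\lambda-C)$, the scheme closes exactly when $g$ obeys the borderline relation $g'(t)^2\le t^{-2}g(t)^2$ (so $g(t)\sim t$): this is precisely where the constant $\tfrac12$ in \eqref{E:potential-minorant} enters, matching the critical Hardy profile for the intrinsic distance to $\partial_C V$. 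The term $\tfrac{1}{2D^2}\chi^2$ then absorbs $\tfrac12\Gamma(\chi)$, leaving $(\lambda-C)\|\chi u\|^2\le(\text{error})$, and one wants to send the error to $0$ along a family $\chi\to 1$.

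I expect the main obstacle to be the construction of admissible cutoffs at this critical constant. A profile such as $\min\{1,D/\delta\}$ is useless here: it does not lie in $\vcomp$, since it is supported on all of $V$ and fails to vanish near $\partial_C V$; yet any $g$ that does vanish at a positive inner radius must violate the pointwise bound $g'^2\le t^{-2}g^2$ near that radius, the hallmark of the borderline case. The remedy is a logarithmic, capacity-type cutoff interpolating between radii $\varepsilon$ and $R$, whose excess energy is concentrated in the thin annulus near $\varepsilon$ and is shown to vanish as $\varepsilon\to 0$ by means of $u\in\lw$ and the intrinsic bound; one simultaneously multiplies by an intrinsic distance-ball cutoff to force finite support (so that $\chi^2u\in\vcomp$ and the summation by parts carries no boundary terms), its energy tending to $0$ because distance-balls keeping a fixed distance from $\partial_C V$ are finite. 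Passing to the limits $\varepsilon\to 0$, $R\to\infty$ gives $(\lambda-C)\|u\|^2\le 0$, whence $u\equiv 0$, so that $H$ is essentially self-adjoint on $\vcomp$.
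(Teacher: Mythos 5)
Your overall architecture coincides with the paper's: reduce essential self-adjointness of the semibounded operator to the vanishing of $\lw$-solutions of $(H+\lambda)u=0$ for suitable $\lambda$ (the paper invokes~\cite[Theorem X.26]{rs}); use a magnetic localization identity in which the phase survives only inside a nonnegative form and inside moduli (this is exactly Lemma~\ref{L:nen} combined with $\operatorname{Re}[e^{-i\theta(x,y)}u(x)\overline{u(y)}]\le\tfrac12(|u(x)|^2+|u(y)|^2)$); build cutoffs from $D(x)$ and exploit the intrinsic bound $\frac{1}{\mu(x)}\sum_yb(x,y)d(x,y)^2\le1$; and correctly identify the constant $\tfrac12$ in~\eqref{E:potential-minorant} as critical. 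Up to that point the proposal is sound.

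The gap is in the step you yourself flag as the main obstacle: the inner cutoff at the critical constant. The logarithmic, capacity-type profile does not close the argument. With $g(t)=\log(t/\varepsilon)/\log(\rho/\varepsilon)$ on $[\varepsilon,\rho]$, the best available bound on your cutoff energy is $\Gamma(\chi)(x)\lesssim\bigl(D(x)\log(\rho/\varepsilon)\bigr)^{-2}$ (and even this needs care with edges straddling the annuli, since no bound on the jump size $\sigma(x,y)$ is assumed), so the ``excess energy'' is of order $(\log(\rho/\varepsilon))^{-2}\sum_{\varepsilon\le D\le\rho}\mu(x)|u(x)|^2/D(x)^2$. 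This is not $o(1)$ as $\varepsilon\to 0$ on the strength of $u\in\lw$ alone: $\sum\mu|u|^2/D^2$ need not be finite a priori, and the crude bound $D^{-2}\le\varepsilon^{-2}$ loses to $(\log(\rho/\varepsilon))^{2}$; establishing $\sum\mu|u|^2/D^2<\infty$ beforehand is essentially the Hardy-type estimate you are trying to prove. The paper's device is different and avoids the issue entirely: it takes the piecewise-affine profile $F_\varepsilon$ which is $0$ on $[0,\varepsilon]$, rises with slope $\rho/(\rho-\varepsilon)$ to meet $F(s)=s$ at $s=\rho$, equals $s$ on $[\rho,1]$, equals $1$ on $[1,R]$, and ramps down to $0$ on $[R,R+1]$. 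Its global Lipschitz constant is $\rho/(\rho-\varepsilon)\to 1$, so the entire energy side is bounded by $\frac{\rho^2}{2(\rho-\varepsilon)^2}\|u\|^2$ \emph{globally} --- no pointwise absorption $\Gamma(\chi)\le\chi^2/D^2$ and no annulus-by-annulus bookkeeping are needed --- while the quadratic-form side is bounded below by $\frac12\sum_{\rho\le D\le R}\mu|u|^2+c_1\|F_\varepsilon(D)u\|^2$ because $\max(D^{-2},1)\,F_\varepsilon(D)^2\ge 1$ on $\{\rho\le D\le R\}$. Letting $\varepsilon\to 0$, then $\rho\to 0$ and $R\to\infty$, the two terms $\tfrac12\|u\|^2$ cancel exactly and the spare $c_1\|u\|^2$ forces $u\equiv 0$. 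To repair your proof, replace the logarithmic cutoff by this affine one (or first supply an a priori Hardy bound, which you currently do not have).
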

\begin{rem}
It is possible to find $\mu$, $b$, and a potential $W$ satisfying
$W(x) \geq \frac{k}{2(D(x))^2}~$ with $0<k<1$,
such that $H=\Delta_{b,\mu}+W$ is not essentially self-adjoint; see~\cite[Section 5.3.2]{vtt-11-2}.
\end{rem}

If the graph $(V,b,\mu)$ has a special type of covering, the condition~(\ref{E:potential-minorant}) on $W$ can be relaxed with the help of ``effective potential," as seen in the next theorem. First, we give a description of this special type of covering. In what follows, for a graph $(V,b,\mu)$, we define the set of unoriented edges as $E:=\{\{x,y\}\colon x,\,y\in V\textrm{ and }b(x,y)>0\}$. Sometimes, when we want to emphasize the set $E$, instead of $G=(V,b,\mu)$ we will use the notation $G=(V,E)$.
\begin{Def}\label{D:def-good-cov}
Let $m\in \mathbb{N}$. A \emph{good covering of degree $m$} of $G=(V,E)$
is a family $G_l =(V_l, E_l )_{l \in L}$
of finite connected sub-graphs of
$G$ so that
\begin{enumerate}
  \item[(i)]$ V=\cup_{l\in L} V_l $;
  \item[(ii)] for any $\{x, y\} \in E,$
\[ 0< \# \{ l \in L~|~ \{ x, y \} \in E_l  \} \leq m .\]
\end{enumerate}
\end{Def}
\begin{rem} It is known that a graph with bounded vertex degree admits a good covering; see~\cite[Proposition 2.2]{vtt-11-3}. The graph in Example~\ref{E:ex-2-1} below does not have a bounded vertex degree. Note that this graph has a good covering of degree $m=2$.
\end{rem}

Assume that $(V,b,\mu)$ has a good covering $(V_l, E_l)_{l\in L}$. Let $\theta_l$ be the restriction of $\theta$ to $V_l\times V_l$.  Let $\Delta^{(l)}_{1,\mu;\theta}$ be as in~(\ref{E:magnetic-lap}) with $V=V_l$, $\theta=\theta_{l}$, and $b\equiv 1$. Then $\Delta^{(l)}_{1,\mu;\theta}$ is a bounded and non-negative self-adjoint operator in $\ell^{2}_{\mu}(V_l)$. Let $p_l$ denote the lowest eigenvalue of $\Delta^{(l)}_{1,\mu;\theta}$.
With these notations, for a graph $(V,b,\mu)$ and the phase function $\theta$, we define the \emph{effective potential} corresponding to a good covering $(V_l, E_l)_{l\in L}$ of degree $m$ as follows:
\begin{equation}\label{E:QW}
 W_{e}(x):= \frac{1}{m}  \sum _{\{ l\in L\, |\, x\in V_l \}} p_l
\inf _{\{y,z\} \in E_l}b(y,z).
\end{equation}
We now state our second result.
\begin{thm}\label{T:main-2}  Assume that $(V, b, \mu)$ is a locally finite, weighted, and connected graph. Assume that $(V, b, \mu)$ has a good covering $(V_l, E_l)_{l\in L}$. Let $d=d_{\sigma}$ be an intrinsic path metric on $V$ such that $(V,d)$ is not metrically complete. Assume that there exists a constant $C$ such that
\begin{equation}\label{E:potential-minorant-effective}
W_{e}(x)+W(x) \geq\frac{1}{2(D(x))^2}-C,\quad \textrm{ for all }x\in V,
\end{equation}
where $W_{e}$ is as in~(\ref{E:QW}) and $D(x)$ is as in~(\ref{E:dist-boundary}).
Then $H$ is essentially self-adjoint on $C_c(V)$.
\end{thm}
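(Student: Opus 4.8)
The plan is to deduce Theorem~\ref{T:main-2} from the argument proving Theorem~\ref{T:main-1} by showing that the magnetic Laplacian already ``contains'' the effective potential, so that in every estimate the bare potential $W$ may be replaced by $W+W_e$. Concretely, the single new ingredient I would establish is the form inequality
\begin{equation*}
(\delswa v,\,v)\;\ge\;\sum_{x\in V}\mu(x)\,W_e(x)\,|v(x)|^2,\qquad v\in C_c(V),\tag{$\ast$}
\end{equation*}
with $W_e$ as in \eqref{E:QW}. Note that $(\ast)$ already yields $(Hv,v)=(\delswa v,v)+(Wv,v)\ge((W+W_e)v,v)\ge -C\|v\|^2$ by \eqref{E:potential-minorant-effective}, so $H$ is semibounded; hence it suffices to prove that for $\lambda>C$ the equation $(\delswa+W+\lambda)u=0$ has no nonzero solution $u\in\ell^2_\mu(V)\cap \Dom(H_{\max})$.

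The crux is $(\ast)$, and this is where Definition~\ref{D:def-good-cov} enters. For $v\in C_c(V)$ Green's formula gives the magnetic Dirichlet form as a sum over unoriented edges,
\[
(\delswa v,v)=\sum_{\{x,y\}\in E}b(x,y)\,\bigl|v(x)-e^{i\theta_{x,y}}v(y)\bigr|^2 .
\]
Writing $\beta_l:=\inf_{\{y,z\}\in E_l}b(y,z)$ and letting $Q_l(v):=\sum_{\{x,y\}\in E_l}|v(x)-e^{i\theta_{x,y}}v(y)|^2$ be the Dirichlet form of $\Delta^{(l)}_{1,\mu;\theta}$, property~(ii) of the good covering (each edge lies in at most $m$ of the $E_l$) gives $\sum_{l\in L}Q^{(b)}_l(v)\le m\,(\delswa v,v)$, where $Q^{(b)}_l$ carries the weight $b$; hence $(\delswa v,v)\ge \tfrac1m\sum_l Q^{(b)}_l(v)\ge\tfrac1m\sum_l\beta_l\,Q_l(v)$. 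Since $p_l$ is the lowest eigenvalue of the non-negative operator $\Delta^{(l)}_{1,\mu;\theta}$ on the finite-dimensional space $\ell^2_\mu(V_l)$, one has $Q_l(v)\ge p_l\sum_{x\in V_l}\mu(x)|v(x)|^2$. Substituting and interchanging the order of summation, the coefficient of $\mu(x)|v(x)|^2$ is exactly $\tfrac1m\sum_{\{l\,:\,x\in V_l\}}p_l\beta_l=W_e(x)$, which is $(\ast)$.

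It remains to insert $(\ast)$ into the proof of Theorem~\ref{T:main-1}. I would take the same compactly supported cut-offs $\eta=\eta_n$ built from the distance $D$ of \eqref{E:dist-boundary}, with $\eta_n\to 1$ pointwise, and use the magnetic localization (``ground state'') identity, which controls the carr\'e du champ $\Gamma(\eta)(x):=\tfrac1{\mu(x)}\sum_{y}b(x,y)(\eta(x)-\eta(y))^2$ via
\[
(\delswa(\eta u),\,\eta u)_\mu \le \operatorname{Re}(\delswa u,\,\eta^2 u)_\mu+\tfrac12\int_V\Gamma(\eta)\,|u|^2\,d\mu .
\]
For a solution of $(\delswa+W+\lambda)u=0$ the first term on the right equals $-\int_V(W+\lambda)\eta^2|u|^2\,d\mu$; bounding the left-hand side below by $(\ast)$ applied to $v=\eta u\in C_c(V)$ yields
\[
\int_V(W+W_e+\lambda)\,\eta^2|u|^2\,d\mu\;\le\;\tfrac12\int_V\Gamma(\eta)\,|u|^2\,d\mu .
\]
From here the computation is verbatim that of Theorem~\ref{T:main-1}: using \eqref{E:potential-minorant-effective} the term $\tfrac1{2(D(x))^2}$ absorbs the cut-off error $\tfrac12\Gamma(\eta_n)$ for the sharp choice of $\eta_n$ (this is exactly where the constant $\tfrac12$ is needed), leaving $(\lambda-C)\int_V\eta_n^2|u|^2\,d\mu\le o(1)$, and letting $n\to\infty$ forces $u\equiv0$.

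The decisive point, and the only place where new content appears, is that $(\ast)$ may be used at the step where the proof of Theorem~\ref{T:main-1} merely discards the non-negative localized kinetic energy $(\delswa(\eta u),\eta u)_\mu$: because we keep the \emph{entire} localized kinetic term rather than a fraction of it, the full effective potential $W_e$ (not $\tfrac1m$ or $\varepsilon$ times it) appears, and no constant is lost. I expect the main technical obstacle to be precisely the bookkeeping around this substitution together with the sharp-constant cut-off estimate inherited from Theorem~\ref{T:main-1}; the verification of $(\ast)$ itself, once the good covering is in hand, is elementary.
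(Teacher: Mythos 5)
Your proposal is correct and follows essentially the same route as the paper: your inequality $(\ast)$ is precisely the paper's Lemma~\ref{L:effective-potential} (stated for $\delswa$ alone rather than for $H$), proved via the same chain of estimates (edge multiplicity $\le m$, lower bound $b\ge\inf_{E_l}b$, lowest eigenvalue $p_l$, interchange of summation), and your localization inequality is exactly Lemma~\ref{L:nen} combined with $\operatorname{Re}[e^{-i\theta_{x,y}}v(x)\overline{v(y)}]\le\tfrac12(|v(x)|^2+|v(y)|^2)$, after which the Agmon-type argument of Lemma~\ref{L:Hor} is reused verbatim, just as the paper does. The only cosmetic difference is that the paper packages the endgame as the abstract hypothesis~(\ref{E:bou}) of Lemma~\ref{L:Hor}, which is then verified from Lemma~\ref{L:effective-potential} and~(\ref{E:potential-minorant-effective}).
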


In the setting of metrically complete graphs, we have the following result:

\begin{thm}\label{T:main-3} Assume that $(V, b, \mu)$ be a locally finite, weighted, and connected graph.  Let $d_{\sigma}$ be a strongly intrinsic path metric on $V$. Let $q\colon V\to [1,\infty)$ be a function satisfying
\begin{equation}\label{E:q-1-2-lipschitz}
|q^{-1/2}(x)-q^{-1/2}(y)|\leq K{\sigma}(x,y),\qquad\textrm{for all }x,\,y\in V\textrm{ such that }x\sim y,
\end{equation}
where $K$ is a constant. Let $H$ be as in~(\ref{E:magnetic-schro}) with $W\colon V\to \mathbb{R}$ satisfying
\begin{equation}\label{E:W-minorant-1}
W(x)\geq -q(x),\quad \textrm{for all }x\in V.
\end{equation}
Let
\begin{equation}\label{E:def-sigma-q}
\sigma_{q}(x,y)=\min\{q^{-1/2}(x),q^{-1/2}(y)\}\cdot\sigma(x,y)
\end{equation}
and let $d_{\sigma_{q}}$ be the path metric corresponding to $\sigma_{q}$. Assume that $(V,d_{\sigma_{q}})$ is metrically complete.
Then $H$ is essentially self-adjoint on $\vcomp$.
\end{thm}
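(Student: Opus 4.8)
The plan is to prove essential self-adjointness through the deficiency-index criterion. Since $H$ is symmetric on $\vcomp$, it suffices, by the basic criterion for essential self-adjointness, to show that every $u\in\lw$ solving the pointwise equation $\delswa u+Wu=iu$ vanishes (the equation with $-i$ is handled identically). So I would fix such a $u$ and aim to prove $\|u\|=0$.

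First I would record the geometric input. From \eqref{E:def-sigma-q} and $q\ge 1$ one has $\sigma_q\le\sigma$, and in fact $\sigma_q$ is again strongly intrinsic, so $d_{\sigma_q}$ is a strongly intrinsic path metric; by hypothesis $(V,d_{\sigma_q})$ is complete. Because an intrinsic path metric on a locally finite graph has finite distance balls once it is complete (the graph analogue of the Hopf--Rinow theorem), I may fix $x_0\in V$, set $\rho(x):=d_{\sigma_q}(x_0,x)$, and form cutoffs $\chi_n:=\phi(\rho/n)$ with $\phi\colon[0,\infty)\to[0,1]$ Lipschitz, $\phi\equiv1$ on $[0,1]$, $\phi\equiv0$ on $[2,\infty)$, and $\mathrm{Lip}(\phi)\le L$. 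Each $\chi_n\in\vcomp$, equals $1$ on the $\rho$-ball of radius $n$, and --- since $\rho$ is $1$-Lipschitz for $d_{\sigma_q}$ and $d_{\sigma_q}(x,y)\le\sigma_q(x,y)$ for $x\sim y$ --- satisfies the decisive bound $|\chi_n(x)-\chi_n(y)|\le (L/n)\,\sigma_q(x,y)$ for neighbours $x\sim y$.

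The heart of the matter is a weighted summation-by-parts. For the imaginary part I would apply the magnetic Green formula (legitimate since the test function is finitely supported) with test function $\chi_n^2u$: because $W$ is real, only the localization term survives, and after bounding $|\chi_n^2(x)-\chi_n^2(y)|\le (L/n)\sigma(x,y)[q^{-1/2}(x)\chi_n(x)+q^{-1/2}(y)\chi_n(y)]$ (using the cutoff bound together with $\sigma_q\le\min\{q^{-1/2}(x),q^{-1/2}(y)\}\,\sigma$), Cauchy--Schwarz, and the strongly intrinsic inequality $\sum_x b(x,y)\sigma(x,y)^2\le\mu(y)$, I expect to reach $\sum_{x}\mu(x)\chi_n^2(x)|u(x)|^2\le (CL/n)\,\mathcal{E}_n^{1/2}\|u\|$, where $\mathcal{E}_n$ is the $q^{-1}\chi_n^2$-weighted magnetic Dirichlet energy of $u$. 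To bound $\mathcal{E}_n$ uniformly I would instead test against $v_n:=q^{-1}\chi_n^2u\in\vcomp$ and take the real part: the $iu$ term drops out, the leading term is exactly $2\mathcal{E}_n$, and the potential contributes a term controlled by the pointwise inequality
\[
-W(x)\,q^{-1}(x)\ \le\ 1,\qquad x\in V,
\]
coming from \eqref{E:W-minorant-1} and $q\ge1$, which replaces the unbounded minorant $-q$ by the constant $1$. The commutator produced by the weight, governed by the differences $q^{-1}(x)\chi_n^2(x)-q^{-1}(y)\chi_n^2(y)$, splits into a piece controlled by the cutoff differences (absorbed into $\tfrac12\mathcal{E}_n$ plus an $O(1/n^2)\|u\|^2$ error) and a piece governed by $q^{-1}(x)-q^{-1}(y)=(q^{-1/2}(x)-q^{-1/2}(y))(q^{-1/2}(x)+q^{-1/2}(y))$, which is precisely controlled by the Lipschitz hypothesis \eqref{E:q-1-2-lipschitz} together with $q^{-1/2}\le 1$. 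Absorbing then yields $\sup_n\mathcal{E}_n\le M<\infty$, and feeding this back gives
\[
\sum_{x}\mu(x)\chi_n^2(x)|u(x)|^2\ \le\ \frac{CL\sqrt{M}}{n}\,\|u\|\ \longrightarrow\ 0,
\]
while monotone convergence gives $\sum_{x}\mu(x)\chi_n^2(x)|u(x)|^2\to\|u\|^2$; hence $u=0$.

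I expect the main obstacle to be the bookkeeping in the energy bound: unlike the continuous case, the discrete product rule couples the values of $q$ and of $\chi_n$ at the two endpoints of each edge, so the $q$-weights must be routed through the $\min$ in the definition of $\sigma_q$ and through \eqref{E:q-1-2-lipschitz} with care, so that every error term is dominated by $\|u\|^2$ via the strongly intrinsic bound rather than by an uncontrolled quantity involving $q$ itself. A secondary point requiring attention is the passage from completeness of $(V,d_{\sigma_q})$ to finiteness of its distance balls, for which I would invoke the graph Hopf--Rinow theorem.
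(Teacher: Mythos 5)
Your proposal is correct in outline, but it reaches the conclusion by a genuinely different functional-analytic route than the paper. The paper does not argue through deficiency spaces: it invokes Kato's criterion that $\hmin$ is essentially self-adjoint if and only if $\hmax=(\hmin)^{*}$ is symmetric, proves an a priori bound on the weighted magnetic Dirichlet energy $T_u^2=\sum_{x,y}b(x,y)\min\{q^{-1}(x),q^{-1}(y)\}|u(x)-e^{i\theta_{x,y}}u(y)|^2$ valid for \emph{every} $u\in\Dom(\hmax)$ (Proposition~\ref{P:info-dom-hmax}, via the cut-offs $\phi_n=\chi_n q^{-1/2}$ and a quadratic inequality), and then kills the antisymmetric part $(Hu,v)-(u,Hv)$ by the commutator estimate $|J_s|\le\frac{1}{2s}(\|v\|T_u+\|u\|T_v)$ built from the $\frac{1}{s}$-Lipschitz cut-off $(1-d_{\sigma_q}(x_0,\cdot)/s)^{+}$. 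Your step with the test function $q^{-1}\chi_n^2u$ --- using $-Wq^{-1}\le 1$ and \eqref{E:q-1-2-lipschitz} to get $\sup_n\mathcal{E}_n<\infty$ --- is exactly the content of the paper's Proposition specialized to the eigenfunction, and your imaginary-part (Caccioppoli) estimate with the $d_{\sigma_q}$-based cut-off, exploiting $\sigma_q\le\min\{q^{-1/2}(x),q^{-1/2}(y)\}\sigma$ and the strongly intrinsic inequality, plays precisely the role of the $J_s$ lemma. What your route buys is a shorter, self-contained endgame that parallels the proof of Theorem~\ref{T:main-1} and avoids the dominated-convergence step for $J_s$; what the paper's route buys is that the energy estimate is proved once for all of $\Dom(\hmax)$ and no identification of deficiency vectors with pointwise solutions is needed (though that identification is legitimate here, since $\Dom(\hmax)=\{u\in\lw\colon Hu\in\lw\}$ with $\hmax$ acting pointwise, so $\hmax u=iu$ forces $Hu=iu\in\lw$ automatically). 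The one place requiring genuine care is the one you flag: in the commutator term $q^{-1}(x)\chi_n^2(x)-q^{-1}(y)\chi_n^2(y)$, mixed weights such as $q^{-1/2}(x)\chi_n^2(y)$ are not pointwise dominated by the symmetric energy weight and must first be exchanged via $q^{-1/2}(x)\le q^{-1/2}(y)+K\sigma(x,y)$, the resulting $K^2\sigma^2$ error being absorbed into $\|u\|^2$ by the strongly intrinsic bound; the paper sidesteps this by working from the outset with the symmetric weight $(\phi(x))^2+(\phi(y))^2$ in \eqref{E:sum-I}.
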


\subsection{Some comments on the existing literature} The notion of intrinsic metric allows us to remove the bounded vertex degree assumption present in~\cite{vtt-11-2,vtt-11-3,Milatovic-12}. More specifically, Theorem~\ref{T:main-1} extends~\cite[Theorem 4.2]{vtt-11-2}, which was proven in the context of graphs of bounded vertex degree for the operator $\Delta_{b,\mu}+W$, with $\Delta_{b,\mu}$ as in~(\ref{E:ord-lap}). Theorem~\ref{T:main-2} is an extension of~\cite[Theorem 3.1]{vtt-11-3}, which was proven in the context of graphs of bounded vertex degree for the operator $\delswa$. In this regard, the first two results of the present paper answer a question posed in~\cite[Section 5]{vtt-11-3}. Theorem~\ref{T:main-3} extends~\cite[Theorem 1]{Milatovic-12}, which was proven in the context of graphs of bounded vertex degree for the operator $\delswa+W$ with $W$ as in~(\ref{E:W-minorant-1}). We should also mention that in the context of locally finite graphs (with an assumption on $b$ and $\mu$ originating from~\cite{Masamune-09}), a sufficient condition for the essential self-adjointness of a semi-bounded from below operator $\delswa+W$ is given in ~\cite[Theorem 1.2]{Milatovic-11}. Another sufficient condition for the essential self-adjointness of $\delswa+W$  is given in~\cite[Proposition 2.2]{Golenia-11}:\emph{ Let $(V,b,\mu)$ be a locally finite weighted graph. Let $W\colon V\to\mathbb{R}$ and $\delta>0$. Take $\lambda\in\mathbb{R}$ so that
\begin{equation}\label{E:g-condition-1}
\{x\in V\colon \lambda+ \operatorname{Deg}(x)+W(x)=0\}=\emptyset,
\end{equation}
where $\operatorname{Deg}(x)$ denotes the ``weighted degree"
\begin{equation}\label{E:Deg-def}
\operatorname{Deg}(x):=\frac{1}{\mu(x)}\sum_{y\in V}b(x,y),\qquad x\in V.
\end{equation}
Suppose that for every sequence of vertices $\{y_1,\,y_2,\dots\}$ such that $y_j\sim y_{j+1}$, $j\geq 1$, the following property holds:
\begin{equation}\label{E:g-condition-2}
\sum_{n=1}^{\infty}((a_n)^2\mu(y_n))=\infty, \quad\textrm{ where} \quad a_n:=\prod_{j=1}^{n-1}\left(\frac{\delta}{\operatorname{Deg}(y_j)}+\left|1+\frac{\lambda+W(y_j)}{\operatorname{Deg}(y_j)}\right|\right), \quad n\geq 2,
\end{equation}
and $a_1:=1$. Then $\delswa+W$ is essentially self-adjoint on $\vcomp$.  }

Note that~\cite[Proposition 2.2]{Golenia-11} allows potentials that are unbounded from below. We mention that Example~\ref{E:ex-2-1} below describes a situation where Theorem~\ref{T:main-2} is applicable, while neither~\cite[Theorem 1.2]{Milatovic-11} nor~\cite[Proposition 2.2]{Golenia-11} is applicable. Additionally, Example~\ref{E:ex-3} below describes a situation where Theorem~\ref{T:main-3} is applicable, while neither~\cite[Theorem 1.2]{Milatovic-11} nor~\cite[Proposition 2.2]{Golenia-11} is applicable.

The recent study~\cite{HKMW} is concerned with the operator $\Delta_{b,\mu}$ as in~(\ref{E:ord-lap}), with  property (iii) of $b$ (see Section~\ref{SS:setting} above) replaced by the following more general condition:
\[
\sum_{y\in V}b(x,y)<\infty,\qquad\textrm{for all }x\in V.
\]
Using the notion of intrinsic distance $d$ with finite jump size, the authors of~\cite{HKMW} show that if the weighted degree~(\ref{E:Deg-def})
is bounded on balls defined with respect to any such distance $d$, then $\Delta_{b,\mu}$ is essentially self-adjoint. In the context of a locally finite graph, the authors of~\cite{HKMW} show that if the graph is metrically complete in any intrinsic path metric with finite jump size, then $\Delta_{b,\mu}$ is essentially self-adjoint. In the metrically incomplete case, one of the results of~\cite{HKMW} shows that if the Cauchy boundary has finite capacity, then $\Delta_{b,\mu}$ has a unique Markovian extension if and only if the Cauchy boundary is polar (here, ``Cauchy boundary is polar" means that the Cauchy boundary has zero capacity). Another result of~\cite{HKMW} shows that if the upper Minkowski codimension of the Cauchy boundary is greater than 2, then the Cauchy boundary is polar. Additionally, we should mention that the authors of~\cite{HKMW} prove Hopf--Rinow-type theorem  for locally finite weighted graphs with a path pseudo metric.

In recent years, various authors have developed independently the concept of intrinsic metric on a graph. The definition given in the present paper can be traced back to the work~\cite{flw}. For applications of intrinsic metrics in various contexts, see, for instance,~\cite{BKW-12,folz-11,folz-12-1,folz,GHM-11,HKLW-preprint-12,Huang-12,Huang-12-preprint,MU-11}.

With regard to the problem of self-adjoint extensions of adjacency, (magnetic) Laplacian and Schr\"odinger-type operators on infinite graphs, we should mention that there has been a lot of interest in this area in the past few years. For references to the literature on this topic, see, for instance,~\cite{vtt-11-2,vtt-11-3,Golenia-11,HKLW-preprint-11,HKMW,Masamune-09,Milatovic-12,Torki-10}.

\section{Proof of Theorem~\ref{T:main-1}}
In this section, we modify the proof of~\cite[Theorem 4.2]{vtt-11-2}. Throughout the section, we assume that the hypotheses of Theorem~\ref{T:main-1}
are satisfied. We begin with the following lemma, whose proof is given in~\cite[Lemma 3.3]{vtt-11-3}.

\begin{lem}\label{L:nen}
Let $H$ be as in~(\ref{E:magnetic-schro}), let $v\in\lw$ be a weak solution of $Hv=0$, and let $f\in C_{c}(V)$ be a real-valued function.
Then the following equality holds:
\begin{equation}\label{E:ute}
(fv, \,H (fv)) = \frac{1}{2}\sum_{x\in V}\sum_{y\sim x } b(x,y)\,\textrm{\emph{Re }}[e^{-i\theta(x,y)}v(x) \overline{ v(y)}](f(x)-f(y))^2.
\end{equation}
\end{lem}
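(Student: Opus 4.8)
The plan is to write $H=\delswa+W$ and reduce the whole identity to a finite, purely algebraic manipulation, exploiting that $f\in C_c(V)$ has finite support and that $(V,b,\mu)$ is locally finite, so every sum that appears is in fact a finite sum and all rearrangements are automatically justified. Since $f$ and $W$ are real-valued, the potential part of $H$ contributes the real diagonal term $\sum_{x\in V}\mu(x)W(x)f(x)^2|v(x)|^2$ to $(fv,H(fv))$, and it remains to analyze $(fv,\delswa(fv))$.

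First I would establish the ``magnetic Green formula'': for $g\in C_c(V)$,
\[
(g,\delswa g)=\frac12\sum_{x\in V}\sum_{y\in V}b(x,y)\,\bigl|g(x)-e^{i\theta_{x,y}}g(y)\bigr|^2.
\]
This follows by writing $(g,\delswa g)=\sum_{x,y}b(x,y)\,g(x)\overline{(g(x)-e^{i\theta_{x,y}}g(y))}$, relabelling $x\leftrightarrow y$ in a second copy, and using $b(x,y)=b(y,x)$ together with $\theta_{x,y}=-\theta_{y,x}$; the key algebraic fact is that $g(y)-e^{-i\theta_{x,y}}g(x)=-e^{-i\theta_{x,y}}(g(x)-e^{i\theta_{x,y}}g(y))$, which makes the two copies combine into the stated sum of squares.

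Next I would apply this with $g=fv$ and expand the modulus squared as $f(x)^2|v(x)|^2+f(y)^2|v(y)|^2-2f(x)f(y)\,\mathrm{Re}[e^{-i\theta_{x,y}}v(x)\overline{v(y)}]$. Using $b(x,y)=b(y,x)$ to symmetrize, the first two terms collapse to the diagonal contribution $\sum_{x}f(x)^2|v(x)|^2\sum_{y}b(x,y)$, while the cross term produces $-\sum_{x,y}b(x,y)f(x)f(y)\,\mathrm{Re}[e^{-i\theta_{x,y}}v(x)\overline{v(y)}]$. On the other hand, expanding $(f(x)-f(y))^2$ in the target and using the symmetry of $\mathrm{Re}[e^{-i\theta_{x,y}}v(x)\overline{v(y)}]$ under $x\leftrightarrow y$ shows that the desired right-hand side equals $\sum_{x}f(x)^2\sum_{y}b(x,y)\mathrm{Re}[e^{-i\theta_{x,y}}v(x)\overline{v(y)}]$ minus exactly that same cross term. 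Thus the off-diagonal pieces already match, and proving the lemma reduces to the pointwise identity $|v(x)|^2\sum_{y}b(x,y)-\sum_{y}b(x,y)\mathrm{Re}[e^{-i\theta_{x,y}}v(x)\overline{v(y)}]=-\mu(x)W(x)|v(x)|^2$.

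Finally, this last identity is precisely the real part of the equation $Hv=0$ tested against $v$ at the vertex $x$: multiplying $(\delswa v)(x)+W(x)v(x)=0$ by $\mu(x)\overline{v(x)}$ and taking real parts yields it immediately (here I interpret $v$ as a solution of $Hv=0$ in the pointwise sense, which is legitimate because $\delswa v$ is defined pointwise by a finite sum). Substituting this into the computation, the diagonal contribution from $\delswa$ cancels against the potential term $\sum_x\mu(x)W(x)f(x)^2|v(x)|^2$, leaving exactly the claimed expression. I expect the main obstacle to be bookkeeping: keeping the phase factors $e^{\pm i\theta_{x,y}}$ and the complex conjugations consistent through the symmetrization so that the cross terms assemble into $\mathrm{Re}[e^{-i\theta_{x,y}}v(x)\overline{v(y)}]$ with the correct factor of $\frac12$, rather than any genuine analytic difficulty, since compact support of $f$ removes all convergence questions.
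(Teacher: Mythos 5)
Your argument is correct: the magnetic Green formula for $g=fv\in C_c(V)$, the expansion of $|f(x)v(x)-e^{i\theta_{x,y}}f(y)v(y)|^2$, the matching of the cross terms with those coming from $(f(x)-f(y))^2$, and the cancellation of the diagonal terms via the pointwise equation $Hv=0$ (which is indeed equivalent to the weak formulation on a locally finite graph) all check out, and compact support of $f$ together with local finiteness makes every sum finite. The paper itself does not prove this lemma but cites~\cite[Lemma 3.3]{vtt-11-3}, and your computation is exactly the standard argument one would expect there, so there is nothing essentially different to compare.
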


The key ingredient in the proof of Theorem~\ref{T:main-1} is the Agmon-type estimate given in the next lemma, whose proof, inspired by an idea of~\cite{Nen}, is based on the technique developed in~\cite{Col-Tr} for magnetic Laplacians on an open
set with compact boundary in $\mathbb{R}^{n}$.

\begin{lem}\label{L:Hor}
Let $\lambda\in\mathbb{R}$ and let $v\in\lw$ be a weak solution of $(H-\lambda)v=0$.
Assume that that there exists a constant $c_1>0$ such that, for all $u \in C_c(V)$,
\begin{equation}\label{E:bou}
(u,\, (H-\lambda) u )  \geq
 \dfrac{1}{2}\sum_{x\in V}\max \left(\dfrac{1}{D(x)^2},1
\right) \mu(x) |u(x)|^2  +  c_1 \|u\|^2.
\end{equation}
Then $v\equiv 0$.
\end{lem}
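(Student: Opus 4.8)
The plan is to feed test functions of the form $fv$, with $f$ a suitable real-valued cutoff in $C_c(V)$, into the quadratic-form bound \eqref{E:bou} and into the localization identity of Lemma~\ref{L:nen}, and then to let $f$ increase to the constant function $1$. Since $v$ is a weak solution of $(H-\lambda)v=0$, I first apply Lemma~\ref{L:nen} with $H$ replaced by $H-\lambda$ (that is, with the potential $W$ replaced by $W-\lambda$, which leaves the right-hand side of \eqref{E:ute} unchanged). For every real $f\in C_c(V)$ this yields
\[
(fv,\,(H-\lambda)(fv))=\frac12\sum_{x\in V}\sum_{y\sim x}b(x,y)\,\textrm{Re}[e^{-i\theta(x,y)}v(x)\overline{v(y)}]\,(f(x)-f(y))^2 .
\]
Estimating $|\textrm{Re}[e^{-i\theta(x,y)}v(x)\overline{v(y)}]|\le |v(x)|\,|v(y)|\le\frac12(|v(x)|^2+|v(y)|^2)$ and symmetrizing the double sum with the help of $b(x,y)=b(y,x)$, I bound the right-hand side above by $\frac12\sum_{x\in V}|v(x)|^2\sum_{y\sim x}b(x,y)(f(x)-f(y))^2$.

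On the other hand, $fv\in C_c(V)$, so \eqref{E:bou} bounds the left-hand side below by $\frac12\sum_{x}\max(D(x)^{-2},1)\mu(x)f(x)^2|v(x)|^2+c_1\|fv\|^2$. Combining the two estimates and cancelling gives the master inequality
\[
c_1\|fv\|^2\le\frac12\sum_{x\in V}|v(x)|^2\Big[\,\sum_{y\sim x}b(x,y)(f(x)-f(y))^2-\max\big(D(x)^{-2},1\big)\mu(x)f(x)^2\,\Big].
\]
Everything now rests on choosing $f$ so that the bracket is controlled by the boundary weight it subtracts.

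For this I would take $f$ to be a function of the boundary distance, $f(x)=\psi(D(x))$ with $D$ as in \eqref{E:dist-boundary}, where $\psi\colon[0,\infty)\to[0,1]$ is nondecreasing, vanishes near $0$, equals $1$ for large argument, and has a slowly varying profile on the transition layer — concretely the power profile $\psi(t)=\min(t,1)^s$ with $s\downarrow 0$ (a logarithmic profile works similarly). Two structural facts drive the estimate: the distance $D$ is $1$-Lipschitz for the path metric, so $|D(x)-D(y)|\le d(x,y)$ whenever $x\sim y$; and, by Definition~\ref{D:intrinsic}, $\sum_{y\sim x}b(x,y)\,d(x,y)^2\le\mu(x)$. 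Together these give
\[
\sum_{y\sim x}b(x,y)(f(x)-f(y))^2\le L_x^2\sum_{y\sim x}b(x,y)\,d(x,y)^2\le L_x^2\,\mu(x),
\]
where $L_x$ is the largest slope of $\psi$ on the range of $D$-values reached by the neighbours of $x$. The profile is chosen precisely so that $L_x$ is of order $D(x)^{-1}$: for the power profile $|\psi'(t)|^2=s^2\,t^{-2}\psi(t)^2$, so that $L_x^2\mu(x)$ is dominated by $s^2\max(D(x)^{-2},1)\mu(x)f(x)^2$, whence the bracket carries the factor $s^2-1<0$ and is nonpositive. Thus the master inequality forces $c_1\|fv\|^2\le 0$; multiplying $f$ by an exhausting family of spatial cutoffs (whose own Dirichlet contribution is absorbed by the bounded weight $1$ in $\max(D(x)^{-2},1)$ and tends to $0$) and letting $s\downarrow0$, so that $f\uparrow1$ pointwise while $\|fv\|^2\to\|v\|^2$ by dominated convergence, yields $c_1\|v\|^2\le0$ and hence $v\equiv0$.

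The main obstacle is the Dirichlet-energy estimate in the transition layer close to the Cauchy boundary. Because no bound on the jump size $\sup_{x\sim y}\sigma(x,y)$ is assumed, an edge may join a vertex $x$ with moderate $D(x)$ to a neighbour $y$ with $D(y)$ much smaller, and then the mean-value bound for $\psi(D(x))-\psi(D(y))$ sees the large slope of $\psi$ at the scale $D(y)$ rather than $D(x)$, so that $L_x$ exceeds the desired order $D(x)^{-1}$. Controlling this — by exploiting the $s\downarrow0$ (or logarithmic) gain, by collecting the offending edges into an error that is summable against $|v|^2$ and vanishes because the transition layer shrinks to the boundary and $v\in\lw$, or by a dyadic decomposition of the layer — is the delicate point on which the whole argument turns.
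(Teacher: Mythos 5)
Your setup is exactly right and matches the paper's: test \eqref{E:bou} and Lemma~\ref{L:nen} (with $H-\lambda$) against $fv$, use $\textrm{Re}[e^{-i\theta(x,y)}v(x)\overline{v(y)}]\leq\frac12(|v(x)|^2+|v(y)|^2)$, the $1$-Lipschitz property of $D$, and Definition~\ref{D:intrinsic}(i). But the proof is not closed: the step you yourself flag as ``the delicate point on which the whole argument turns'' is a genuine gap, not a technicality. With the profile $\psi(t)=\min(t,1)^{s}$ the slope blows up like $st^{s-1}$ as $t\to0$, so for an edge $x\sim y$ with $D(y)\ll D(x)$ the mean-value bound gives $L_x\sim sD(y)^{s-1}$, which is not dominated by $sD(x)^{-1}\psi(D(x))$; since no jump-size bound on $\sigma$ is assumed, such edges cannot be excluded, and none of the three remedies you sketch is carried out. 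There is also a second, independent problem: $\min(t,1)^{s}$ vanishes only \emph{at} $0$, while $D(x)>0$ for every $x\in V$, so your $f$ does not vanish near the Cauchy boundary; and in a metrically incomplete path metric space the balls $\{d(x_0,\cdot)\leq R\}$ need not be finite (vertices can accumulate at the boundary within bounded distance), so $fv$ need not be a legitimate test function in $C_c(V)$ even after your spatial cutoff.

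The paper's cutoff is designed precisely to avoid both issues: it takes $f_\epsilon=F_\epsilon\circ D$ with $F_\epsilon$ piecewise affine, equal to $0$ for $s\leq\epsilon$, rising linearly to $\rho$ on $[\epsilon,\rho]$, equal to $s$ on $[\rho,1]$, equal to $1$ on $[1,R]$, and descending to $0$ on $[R,R+1]$. Truncating below $\epsilon$ makes the support finite (by~\cite[Lemma A.3(b)]{HKMW}), and the \emph{global} Lipschitz constant $\rho/(\rho-\epsilon)$ lets one apply the intrinsic-metric bound $\sum_{y}b(x,y)d(x,y)^2\leq\mu(x)$ uniformly, giving the upper bound $\frac{\rho^2}{2(\rho-\epsilon)^2}\|v\|^2$ with no edgewise case analysis. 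The price is that the lower bound from \eqref{E:bou} is kept only on the annulus $\{\rho\leq D(x)\leq R\}$, where $\max(D(x)^{-2},1)f_\epsilon(x)^2=1$ exactly; letting $\epsilon\to0+$, then $\rho\to0+$ and $R\to\infty$, the two sides both converge to $\frac12\|v\|^2$ and the surviving term $c_1\|f_\epsilon v\|^2\to c_1\|v\|^2$ forces $v\equiv0$. If you replace your power profile by this affine one (and keep only the annulus term in the lower bound), your argument goes through; as written, it does not.
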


\begin{proof}
Let $\rho$ and $R$ be numbers satisfying
$0< \rho < 1/2$ and  $ 1 < R < +\infty$.
For any $\epsilon >0$, we define the  function $f_{\epsilon}\colon V \rightarrow \mathbb{R}$
by $f_{\epsilon}(x)=F_{\epsilon}(D(x))$, where $D(x)$ is as in~(\ref{E:dist-boundary}) and $F_{\epsilon}\colon{\mathbb{R}}^{+} \rightarrow \mathbb{R}$ is the
continuous piecewise affine
function defined by
\[
F_{\epsilon}(s)= \left\{
\begin{array}{l}
0  {\rm ~ for~}  s\leq \epsilon   \\
\rho (s-\epsilon)/(\rho  - \epsilon  )   {\rm ~ for~} \epsilon \leq s \leq  \rho  \\
s    {\rm ~ for~} \rho  \leq s \leq  1  \\
1   {\rm ~ for~} 1 \leq s \leq R  \\
 R+1 -s   {\rm ~ for~} R \leq s \leq R+1 \\
0   {\rm ~ for~ }  s \geq R+1
\end{array}
\right.
\]

We first note that by the definition of $F_{\epsilon}$ and continuity of $D(x)$, the support of $f_{\epsilon}$ is compact. Now by~\cite[Lemma A.3(b)]{HKMW} it follows that the support of $f_{\epsilon}$ finite. Using Lemma~\ref{L:nen} with $H-\lambda$ in place of $H$, the inequality
$$ \textrm{Re }[e^{-i\theta(x,y)}v(x)\overline{v(y)}] \leq \frac{1}{2}(|v(x)|^2+|v(y)|^2),$$
and Definition~\ref{D:intrinsic}(i) we have
\begin{eqnarray}\label{E:equ-rhs}
&&(f_{\epsilon} v,\, (H-\lambda) (f_{\epsilon} v)) \leq
\frac{1}{2} \sum _{x\in V } \sum_{y\sim x} b(x,y)|v(x)|^2(f_{\epsilon}(x)-f_{\epsilon}(y))^2\nonumber\\
&&\leq \frac{\rho^2}{2(\rho-\epsilon)^2}\sum _{x\in V} \sum_{y\sim x} |v(x)|^2 b(x,y)(d(x,y))^2\leq \frac{\rho^2}{2(\rho-\epsilon)^2}\sum _{x\in V}\mu(x) |v(x)|^2,
\end{eqnarray}
where the second inequality uses the fact that
$f_{\epsilon}$ is a $\beta$-Lipschitz function with $\beta={\rho}/({\rho-\epsilon})$.

On the other hand, using the definition of $f_{\epsilon}$ and the assumption~(\ref{E:bou}) we have
\begin{equation} \label{E:equ-lhs}
(f_{\epsilon}v,\,(H-\lambda) (f_{\epsilon} v))
\geq \frac{1}{2} \sum _{\rho\leq D(x) \leq R }\mu(x)|v(x)|^2 +c_1 \| f_{\epsilon} v
\|^2.
\end{equation}
We now combine (\ref{E:equ-lhs}) and (\ref{E:equ-rhs}) to get
\[
\frac{1}{2} \sum _{\rho\leq D(x) \leq R }\mu(x)|v(x)|^2 +c_1 \| f_{\epsilon} v\|^2\leq \frac{\rho^2}{2(\rho-\epsilon)^2}\sum _{x\in V}\mu(x) |v(x)|^2.
\]
We fix $\rho$ and $R$, and let $\epsilon \to 0+$. After that, we let  $\rho \to 0+ $ and $R \to +\infty$.
As a result, we get $v\equiv 0$.
\end{proof}

\noindent\textbf{Conclusion of the proof of Theorem~\ref{T:main-1}.} Since $\delswa|_{\vcomp}$ is a non-negative operator, for all $u \in C_{c}(V)$, we have
\[
(u,\, Hu) \geq \sum_{x\in V} \mu(x) W(x) |u(x)|^2,
\]
and, hence, by assumption~(\ref{E:potential-minorant}) we get:
\begin{eqnarray}\label{E:bou-new}
&&(u,\, (H-\lambda) u)\geq \dfrac{1}{2}\sum_{x\in V} \frac{1}{D(x)^2}\mu(x)|u(x)|^2-(\lambda+C)\|u\|^2\nonumber\\
&&\geq  \dfrac{1}{2}\sum_{x\in V} \max\left(\frac{1}{D(x)^2},1\right)\mu(x)|u(x)|^2 -(\lambda+C+1/2)\|u\|^2.
\end{eqnarray}
Choosing, for instance, $\lambda=-C-3/2$ in~(\ref{E:bou-new}) we get the
inequality (\ref{E:bou}) with $c_1=1$.

Thus, $(H-\lambda)|_{\vcomp}$ with $\lambda=-C-3/2$ is a symmetric operator satisfying
$(u,\, (H-\lambda) u)\geq \|u\|^2$, for all $u\in\vcomp$. In this case, it is known (see~\cite[Theorem X.26]{rs}) that the essential self-adjointness of $(H-\lambda)|_{\vcomp}$ is equivalent to the following statement: if $v\in\lw$ satisfies $(H-\lambda)v=0$, then $v=0$. Thus, by Lemma~\ref{L:Hor}, the operator $(H-\lambda)|_{\vcomp}$ is essentially self-adjoint. Hence, $H|_{\vcomp}$ is essentially self-adjoint. $\hfill\square$

\section{Proof of Theorem~\ref{T:main-2}}
Throughout the section, we assume that the hypotheses of Theorem~\ref{T:main-2} are satisfied. We begin with the following lemma.

\begin{lem}\label{L:effective-potential} Let $(V_l,E_l)_{l\in L}$ be a good covering of degree $m$ of $(V, b, \mu)$, let $H$ be as in~(\ref{E:magnetic-schro}), and let $W_{e}$ be as in~(\ref{E:QW}).
Then, for all $u\in\vcomp$ we have
\begin{equation}\label{E:to-prove}
(u, Hu) \geq \sum_{x\in V}\mu(x) (W_{e}(x) +W(x)) |u(x)|^2.
\end{equation}
\end{lem}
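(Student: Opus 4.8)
The plan is to reduce the global inequality to a sum of local contributions, one for each subgraph $G_l$ in the good covering, and then estimate each local contribution from below using the lowest eigenvalue $p_l$ of the combinatorial magnetic Laplacian $\Delta^{(l)}_{1,\mu;\theta}$. The starting point is the quadratic form of the magnetic Schr\"odinger operator. For $u\in\vcomp$, a standard Green-type computation (summation by parts, using the symmetry $b(x,y)=b(y,x)$ and $\theta(x,y)=-\theta(y,x)$) gives
\begin{equation}\label{E:proof-plan-qform}
(u,\delswa u)=\frac{1}{2}\sum_{x\in V}\sum_{y\sim x}b(x,y)\,|u(x)-e^{i\theta_{x,y}}u(y)|^2,
\end{equation}
so that $(u,Hu)=\tfrac12\sum_{x}\sum_{y\sim x}b(x,y)|u(x)-e^{i\theta_{x,y}}u(y)|^2+\sum_{x}\mu(x)W(x)|u(x)|^2$. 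Because of hypothesis (iii) in Definition~\ref{D:def-good-cov}, every edge $\{x,y\}\in E$ lies in at least one and at most $m$ of the edge sets $E_l$; I would therefore write the identity
\[
b(x,y)=\frac{1}{\#\{l\in L\mid \{x,y\}\in E_l\}}\sum_{\{l\in L\mid \{x,y\}\in E_l\}}b(x,y),
\]
and use $\#\{l\mid\{x,y\}\in E_l\}\le m$ to bound the edge weight below by $\frac{1}{m}\sum_{l:\{x,y\}\in E_l}b(x,y)$. This lets me split the Dirichlet-type sum over $E$ into a sum over $l\in L$ of sums over edges of $E_l$.

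Next I would bound each edge weight $b(x,y)$ appearing in the $l$-th block from below by $\inf_{\{w,z\}\in E_l}b(w,z)$, since $u\mapsto\sum_{y\sim x\textrm{ in }E_l}|u(x)-e^{i\theta_{x,y}}u(y)|^2$ is a nonnegative form and scaling the weights down preserves the lower bound. After this step the $l$-th contribution to $(u,\delswa u)$ is at least
\[
\frac{1}{m}\,\Big(\inf_{\{w,z\}\in E_l}b(w,z)\Big)\cdot\frac{1}{2}\sum_{x\in V_l}\sum_{y\sim x\textrm{ in }E_l}|u(x)-e^{i\theta_{x,y}}u(y)|^2.
\]
The inner double sum is precisely the quadratic form of $\Delta^{(l)}_{1,\mu;\theta}$ evaluated on the restriction $u|_{V_l}$, because that operator is the magnetic Laplacian on $G_l$ with all weights equal to $1$; that is, it equals $(u|_{V_l},\Delta^{(l)}_{1,\mu;\theta}u|_{V_l})_{\ell^2_\mu(V_l)}$. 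Since $p_l$ is the lowest eigenvalue of the nonnegative self-adjoint operator $\Delta^{(l)}_{1,\mu;\theta}$, the min-max principle gives $(u|_{V_l},\Delta^{(l)}_{1,\mu;\theta}u|_{V_l})\ge p_l\sum_{x\in V_l}\mu(x)|u(x)|^2$.

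Assembling the pieces, the magnetic term is bounded below by $\frac1m\sum_{l\in L}p_l\,(\inf_{\{w,z\}\in E_l}b(w,z))\sum_{x\in V_l}\mu(x)|u(x)|^2$, and interchanging the order of summation (finitely many terms, since $u$ has finite support and the covering is locally finite on $\operatorname{supp}u$) turns this into $\sum_{x\in V}\mu(x)|u(x)|^2\cdot\frac1m\sum_{\{l\mid x\in V_l\}}p_l\inf_{\{w,z\}\in E_l}b(w,z)=\sum_{x\in V}\mu(x)W_e(x)|u(x)|^2$, which is exactly the definition~(\ref{E:QW}) of $W_e$. Adding the potential term $\sum_x\mu(x)W(x)|u(x)|^2$ yields~(\ref{E:to-prove}). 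The main obstacle I anticipate is bookkeeping rather than conceptual: I must be careful that restricting $u$ to $V_l$ and then summing the local forms reproduces each edge of $E_l$ exactly once with the correct orientation and phase, and that the step of replacing the true edge weight $b(x,y)$ by $\inf_{E_l}b$ is legitimate edge-by-edge while still yielding a clean factorization. Verifying that $p_l\ge 0$ and that the min-max bound applies uniformly on $V_l$ (so that the weighted $\ell^2_\mu(V_l)$ norm is the right reference measure) is the technical crux that makes the local-to-global passage work.
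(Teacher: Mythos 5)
Your proposal is correct and follows essentially the same route as the paper: rewrite the quadratic form as an edge sum, use the covering bound $b(x,y)\ge\frac1m\sum_{l:\{x,y\}\in E_l}b(x,y)$, lower-bound each block by $\inf_{E_l}b$ times the quadratic form of $\Delta^{(l)}_{1,\mu;\theta}$, apply the lowest-eigenvalue bound $p_l$, and interchange sums to recover $W_e$. The paper's proof is just a terser version of exactly this argument.
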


\begin{proof}
It is well known that
\[
(u,Hu)=\sum_{\{x,y\}\in E}b(x,y)|u(x)-e^{i\theta(x,y)}u(y)|^2+\sum_{x\in V}\mu(x)W(x)|u(x)|^2,
\]
where $E$ is the set of unoriented edges of $(V, b, \mu)$. Thus, using the definition of the covering $(V_l,E_l)_{l\in L}$ of degree $m$ and the definition of $p_l$ we have
\begin{eqnarray*}
&&(u, Hu)\geq \frac{1}{m} \sum _{ l \in L }
\sum _{\{x,y\} \in E_l }b(x,y) |u(x)-e^{i\theta(x,y)}u(y)|^2
+\sum_{x\in V}\mu(x)W(x)|u(x)|^2\nonumber\\
&&\geq \frac{1}{m}\sum _{l\in L}\left( \left( \inf_{\{y,z\} \in E_l}
b(y,z)\right)p_l\sum _{x\in V_l } \mu(x) |u(x)|^2\right)+\sum_{x\in V}\mu(x)W(x)|u(x)|^2,\\
\end{eqnarray*}
which together with~(\ref{E:QW}) gives~(\ref{E:to-prove}).
\end{proof}

\noindent\textbf{Conclusion of the proof of Theorem~\ref{T:main-2}.}
By Lemma~\ref{L:effective-potential} and assumption~(\ref{E:potential-minorant-effective}), for all $u \in \vcomp$ we have
\begin{eqnarray*}
&&(u,\, (H-\lambda)u) \geq \sum_{x\in V} \mu(x) (W_{e}(x)+W(x)-\lambda) |u(x)|^2\\
&&\geq\frac{1}{2}\sum_{x\in V} \max\left(\frac{1}{D(x)^2},1 \right)
\mu(x) |u(x)|^2-(C+\lambda+1/2)\|u\|^2.
\end{eqnarray*}
From hereon we proceed in the same way as in the the proof of Theorem~\ref{T:main-1}. $\hfill\square$

\section{Proof of Theorem~\ref{T:main-3}}
In this section we modify the proof of~\cite[Theorem 1]{Milatovic-12}, which is based on the technique of~\cite{sh2} in the context of Riemannian manifolds. Throughout the section, we assume that the hypotheses of Theorem~\ref{T:main-3} are satisfied.

We begin with the definitions of minimal and maximal operators associated with the expression~(\ref{E:magnetic-schro}). We define the operator $\hmin$ by the formula $\hmin u:=Hu$, for all $u\in\Dom(\hmin):=\vcomp$.  As $W$ is real-valued, it follows easily that the operator $\hmin$ is symmetric in $\lw$. We define $\hmax:=(\hmin)^{*}$, where $T^*$ denotes the adjoint of operator $T$. Additionally, we define
$\mathcal{D}:=\{u\in \lw\colon Hu\in \lw\}$. Then, the following hold: $\Dom(\hmax)=\mathcal{D}$ and $\hmax u=Hu$ for all $u\in\mathcal{D}$; see, for instance,~\cite[Section 3]{Milatovic-12} or~\cite[Section 3]{Torki-10} for details. Furthermore, by~\cite[Problem V.3.10]{Kato80}, the operator  $\hmin$ is essentially self-adjoint if and only if
\begin{equation}\label{E:h-max-symmetric}
    (\hmax u,v)=(u,\hmax v),\qquad\textrm{for all }u\,,v\in\Dom(\hmax).
\end{equation}

In the setting of graphs of bounded vertex degree, the following proposition was proven in~\cite[Proposition 12]{Milatovic-12}.
\begin{Prop}\label{P:info-dom-hmax} If $u\in\Dom(\hmax)$, then
\begin{align}\label{E:info-dom-hmax}
\sum_{x,y\in V}b(x,y)\min\{q^{-1}(x),q^{-1}(y)\}|u(x)-e^{i\theta_{x,y}}u(y)|^2\leq 4(\|Hu\|\|u\|+(K^2+1)\|u\|^2),
\end{align}
where $H$ is as in~(\ref{E:magnetic-schro}) and $K$ is as in~(\ref{E:q-1-2-lipschitz}).
\end{Prop}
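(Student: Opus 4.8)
The plan is to test the operator $H$ against the weighted function $w:=q^{-1}u$ and to exploit a magnetic ``ground state transform.'' Write $g:=q^{-1/2}$, so that $0<g\le 1$ (since $q\ge 1$) and $w=g^2u$; then $w\in\lw$ with $\|w\|\le\|u\|$, and since $u\in\Dom(\hmax)$ we have $Hu\in\lw$, so $(Hu,w)$ is well defined. The starting point is the magnetic Green formula, with the shorthand $\nabla_{xy}u:=u(x)-e^{i\theta_{x,y}}u(y)$,
\[
(Hu,w)=\frac12\sum_{x\in V}\sum_{y\in V}b(x,y)\,\nabla_{xy}u\,\overline{\nabla_{xy}w}+\sum_{x\in V}\mu(x)W(x)\,u(x)\overline{w(x)},
\]
which follows by summation by parts from $b(x,y)=b(y,x)$ and $\theta_{x,y}=-\theta_{y,x}$. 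Because $u$ need not lie in $\vcomp$, the a priori only conditionally convergent double sum has to be justified; I would first run the whole argument on the finitely supported truncations $\chi_nu$ of $u$, where $(\chi_n)$ is a sequence of cut-offs exhausting $V$, and then pass to the limit, using that the right-hand side of \eqref{E:info-dom-hmax} is finite and that the summands on the left are non-negative (monotone convergence / Fatou).

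Next I would use the pointwise identity on each edge (a direct computation from $p^2|\alpha|^2+r^2|\beta|^2-(p^2+r^2)\mathrm{Re}(\alpha\bar\beta)=|p\alpha-r\beta|^2-(p-r)^2\mathrm{Re}(\alpha\bar\beta)$ with $p=g(x)$, $r=g(y)$, $\alpha=u(x)$, $\beta=e^{i\theta_{x,y}}u(y)$),
\[
\mathrm{Re}\bigl(\nabla_{xy}u\,\overline{\nabla_{xy}(g^2u)}\bigr)=\bigl|\nabla_{xy}(gu)\bigr|^2-(g(x)-g(y))^2\,\mathrm{Re}\bigl(e^{-i\theta_{x,y}}u(x)\overline{u(y)}\bigr).
\]
Summing and feeding this into the Green formula expresses $\sum_{x,y}b(x,y)|\nabla_{xy}(gu)|^2$ in terms of $\mathrm{Re}(\delswa u,g^2u)$ plus a commutator $\Sigma_1:=\sum_{x,y}b(x,y)(g(x)-g(y))^2\mathrm{Re}(e^{-i\theta_{x,y}}u(x)\overline{u(y)})$. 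To recover the $\min$-weighted form of \eqref{E:info-dom-hmax}, I would then use the elementary comparison on an edge with $g(x)\le g(y)$, coming from $g(x)\nabla_{xy}u=\nabla_{xy}(gu)+e^{i\theta_{x,y}}(g(y)-g(x))u(y)$ together with $|a+b|^2\le 2|a|^2+2|b|^2$,
\[
\min\{g(x)^2,g(y)^2\}\,|\nabla_{xy}u|^2\le 2\,|\nabla_{xy}(gu)|^2+2(g(y)-g(x))^2|u(y)|^2 .
\]

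The hypotheses enter in three estimates. First, Cauchy--Schwarz gives $\mathrm{Re}(Hu,g^2u)\le\|Hu\|\,\|g^2u\|\le\|Hu\|\,\|u\|$; since $u(x)\overline{w(x)}=q^{-1}(x)|u(x)|^2\ge0$ and $W(x)\ge-q(x)$ by \eqref{E:W-minorant-1}, the potential term satisfies $\sum_x\mu(x)W(x)q^{-1}(x)|u(x)|^2\ge-\|u\|^2$, whence $\mathrm{Re}(\delswa u,g^2u)\le\|Hu\|\,\|u\|+\|u\|^2$. Second, both commutator sums are controlled by combining the Lipschitz bound \eqref{E:q-1-2-lipschitz}, namely $(g(x)-g(y))^2\le K^2\sigma(x,y)^2$, with the strongly intrinsic property of Definition~\ref{D:intrinsic}(ii) in the form $\sum_{y}b(x,y)\sigma(x,y)^2\le\mu(x)$: using $\mathrm{Re}(e^{-i\theta_{x,y}}u(x)\overline{u(y)})\le\tfrac12(|u(x)|^2+|u(y)|^2)$ one gets $|\Sigma_1|\le K^2\|u\|^2$, and summing each edge at its larger-$g$ endpoint gives $\sum b(x,y)(g(y)-g(x))^2|u(y)|^2\le K^2\|u\|^2$ with no extra factor. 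Assembling the three bounds and keeping track of the numerical constants produces an inequality of exactly the form \eqref{E:info-dom-hmax}, precisely as in the bounded-degree model~\cite[Proposition 12]{Milatovic-12}.

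The main obstacle is the \emph{weight mismatch} in the comparison step: on an edge where $g(x)\ll g(y)$, the cross term $(g(y)-g(x))|\nabla_{xy}u|\,|u(y)|$ carries the gradient with the \emph{large} weight, while the left-hand side of \eqref{E:info-dom-hmax} only offers the \emph{small} weight $\min\{g(x)^2,g(y)^2\}=g(x)^2$; a naive Young splitting would therefore divide by the small quantity $g(x)$ and blow up. Routing the comparison through $|\nabla_{xy}(gu)|^2$ with the crude bound $|a+b|^2\le 2|a|^2+2|b|^2$ is exactly what circumvents this division, at the cost of the explicit constants appearing in \eqref{E:info-dom-hmax}. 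The second, more technical, difficulty is the justification of the Green formula and of all rearrangements for a general $u\in\Dom(\hmax)$ rather than $u\in\vcomp$; this is dispatched by the cut-off/limiting argument described above, the only essential change from the bounded-degree setting being that the degree bound used there is replaced throughout by the strongly intrinsic inequality $\sum_{y}b(x,y)\sigma(x,y)^2\le\mu(x)$.
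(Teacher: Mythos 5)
Your algebraic core---the magnetic Green formula tested against $q^{-1}u$, the ground-state-transform identity, and the edge comparison $\min\{g(x)^2,g(y)^2\}|u(x)-e^{i\theta_{x,y}}u(y)|^2\le 2|g(x)u(x)-e^{i\theta_{x,y}}g(y)u(y)|^2+2(g(x)-g(y))^2|u(y)|^2$---is sound, and the three estimates you draw from \eqref{E:W-minorant-1}, \eqref{E:q-1-2-lipschitz} and Definition~\ref{D:intrinsic}(ii) are the right ones (though tracking your constants gives roughly $4\|Hu\|\|u\|+(4+6K^2)\|u\|^2$ rather than the stated $4(\|Hu\|\|u\|+(K^2+1)\|u\|^2)$; this is harmless, since only finiteness of the left-hand side is used later). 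The genuine gap is in your justification step. You propose to legitimize the only conditionally convergent double sum by replacing $u$ with the truncations $\chi_nu$ and passing to the limit. Fatou controls the left-hand side, but on the right you would need $\limsup_n\|H(\chi_nu)\|\,\|\chi_nu\|\le\|Hu\|\,\|u\|$, i.e.\ control of the commutator $[H,\chi_n]u$. A direct computation gives
\[
([H,\chi_n]u)(x)=\frac{1}{\mu(x)}\sum_{y\in V}b(x,y)\,e^{i\theta_{x,y}}(\chi_n(x)-\chi_n(y))u(y),
\]
and Cauchy--Schwarz together with $\sum_{y}b(x,y)\sigma(x,y)^2\le\mu(x)$ only yields $\|[H,\chi_n]u\|^2\le n^{-2}\sum_{y}\mu(y)\operatorname{Deg}(y)|u(y)|^2$, with $\operatorname{Deg}$ as in \eqref{E:Deg-def}. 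Since the whole point of the paper is to drop the bounded-degree hypothesis, this quantity need not be finite for $u\in\lw$, and the truncation-of-$u$ scheme breaks down exactly where the new difficulty lives.

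The repair is to put the cut-off into the test function rather than into $u$: test $Hu$ against $w_n:=\phi_n^2u$ with $\phi_n:=\chi_nq^{-1/2}\in\vcomp$, so that every sum in the Green formula is finite and no rearrangement needs justification; the terms produced by $(\phi_n(x)-\phi_n(y))^2$ are then absorbed exactly like your commutator $\Sigma_1$, using $|\phi_n(x)-\phi_n(y)|\le(1/n+K)\sigma(x,y)$. This is precisely the paper's proof: it works throughout with the weight $\phi_n(x)^2+\phi_n(y)^2$, derives a quadratic inequality in $I_n$, lets $n\to\infty$ by Fatou, and only at the very end passes from $q^{-1}(x)+q^{-1}(y)$ to $2\min\{q^{-1}(x),q^{-1}(y)\}$. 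Once your limiting step is fixed in this way, your argument essentially collapses into the paper's, and the detour through $|g(x)u(x)-e^{i\theta_{x,y}}g(y)u(y)|^2$ and the $\min$-comparison becomes unnecessary.
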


Before proving Proposition~\ref{P:info-dom-hmax}, we define a sequence  of cut-off functions.   Let  $d_{\sigma}$  and $d_{\sigma_{q}}$ be as in the hypothesis of Theorem~\ref{T:main-3}. Fix $x_0\in V$ and  define
\begin{equation}\label{E:cut-off}
\chi_n(x):=\left(\left(\frac{2n-d_{\sigma}(x_0,x)}{n}\right)\vee 0\right)\wedge 1,\qquad x\in V,\quad n\in \mathbb{Z}_{+}.
\end{equation}

Denote
\begin{equation}\label{E:nbd-x-0}
B^{\sigma}_{n}(x_0):=\{x\in V\colon d_{\sigma}(x_0,x)\leq n\}.
\end{equation}
The sequence $\{\chi_n\}_{n\in\mathbb{Z}_{+}}$ satisfies the following properties: (i) $0\leq \chi_n(x)\leq 1$, for all $x\in V$; (ii)  $\chi_n(x)=1$ for $x\in B^{\sigma}_{n}(x_0)$ and  $\chi_n(x)=0$ for $x\notin B^{\sigma}_{2n}(x_0)$; (iii) for all $x\in V$, we have $\displaystyle\lim_{n\to\infty}\chi_n(x)=1$; (iv) the functions $\chi_n$ have finite support; and (v) the functions $\chi_n$ satisfy the inequality
\begin{equation}\nonumber
|\chi_n(x)-\chi_n(y)|\leq \frac{\sigma(x,y)}{n},\qquad\textrm{for all }x\sim y.
\end{equation}
 The properties (i)--(iii) and (v) can be checked easily. By hypothesis, we know that $(V,d_{\sigma_{q}})$ is a complete metric space and, thus, balls with respect to $d_{\sigma_{q}}$ are finite; see, for instance,~\cite[Theorem A.1]{HKMW}. Let $B^{\sigma_{q}}_{2n}(x_0)$ be as in~(\ref{E:nbd-x-0}) with $d_{\sigma}$ replaced by $d_{\sigma_{q}}$. Since $q\geq 1$ it follows that $B^{\sigma}_{2n}(x_0)\subseteq B^{\sigma_{q}}_{2n}(x_0)$. Thus, property (iv) is a consequence of property (ii) and the finiteness of $B^{\sigma_{q}}_{2n}(x_0)$.

 \medskip

\noindent\textbf{Proof of Proposition~\ref{P:info-dom-hmax}.} Let $u\in\Dom(\hmax)$ and let $\phi\in\vcomp$ be a real-valued function. Define
\begin{equation}\label{E:sum-I}
I:=\left(\sum_{x,y\in V}b(x,y)|u(x)-e^{i\theta_{x,y}}u(y)|^2((\phi(x))^2+(\phi(y))^2)\right)^{1/2}.
\end{equation}

We will first show that
\begin{align}\label{E:sum-I-show}
&I^2\leq 4|(\phi^2Hu,u)|+4(\phi^2qu,u)\nonumber\\
&+\sqrt{2}I\left(\sum_{x,y\in V}b(x,y)(\phi(x)-\phi(y))^2|(u(x)+e^{i\theta_{x,y}}u(y)|^2\right)^{1/2}.
\end{align}

To do this, we first note that
\begin{align}\label{E:I-squared-equality}
I^2&=4(\phi^2Hu,u)-4(\phi^2Wu,u)\nonumber\\
&+\sum_{x,y\in V}b(x,y)(e^{i\theta_{x,y}}u(y)-u(x))(e^{-i\theta_{x,y}}\overline{u(y)}+\overline{u(x)})((\phi(x))^2-(\phi(y))^2),
\end{align}
which can be checked by expanding the terms under summations on both sides of the equality and using the properties $b(x,y)=b(y,x)$ and $\theta(x,y)=-\theta(y,x)$. The details of this computation can be found in the proof of~\cite[Proposition 12]{Milatovic-12}.

The inequality~(\ref{E:sum-I-show}) is obtained from~(\ref{E:I-squared-equality}) by using~(\ref{E:W-minorant-1}), the factorization
\[
(\phi(x))^2-(\phi(y))^2=(\phi(x)-\phi(y))(\phi(x)+\phi(y)),
\]
Cauchy--Schwarz inequality, and
\[
(\phi(x)+\phi(y))^2\leq 2(\phi^2(x)+\phi^2(y)).
\]

Let $\chi_{n}$ be as in~(\ref{E:cut-off}) and let $q$ be as in~(\ref{E:W-minorant-1}). Define
\begin{equation}\label{E:cut-off-phi-n}
\phi_{n}(x):=\chi_{n}(x)q^{-1/2}(x).
\end{equation}
By property (iv) of $\chi_n$ it follows that $\phi_n$ has finite support. By property (i) of $\chi_n$  and since $q\geq 1$, we have
\begin{equation}\label{E:cut-off-phi-n-property}
0\leq \phi_n(x)\leq q^{-1/2}(x)\leq 1,\qquad\textrm{for all }x\in V.
\end{equation}

By property (iii) of $\chi_n$  we have

\begin{equation}\label{E:cut-off-phi-n-limit}
\displaystyle\lim_{n\to\infty}\phi_n(x)=q^{-1/2}(x),\qquad\textrm{for all }x\in V.
\end{equation}

By~(\ref{E:q-1-2-lipschitz}), properties (i) and (v) of $\chi_n$, and the inequality $q\geq 1$, we have
\begin{align}\label{E:d-phi-n}
|\phi_{n}(x)-\phi_n(y)|\leq\left(\frac{1}{n}+K\right)\sigma(x,y),\qquad \textrm{for all }x\sim y,
\end{align}
where $K$ is as in~(\ref{E:q-1-2-lipschitz}). We will also use the inequality
\begin{align}\label{E:u-sharp-sigma-bound}
|e^{i\theta_{x,y}}u(y)+{u(x)}|^2\leq 2(|u(x)|^2+|u(y)|^2).
\end{align}

By~(\ref{E:d-phi-n}),~(\ref{E:u-sharp-sigma-bound}), and Definition~\ref{D:intrinsic}(ii), we get
\begin{align}\label{E:u-sharp-sigma-bound-1}
&\left(\sum_{x,y\in V}b(x,y)(\phi_n(x)-\phi_n(y))^2|(u(x)+e^{i\theta_{x,y}}u(y)|^2\right)^{1/2}\nonumber\\
&\leq {\sqrt{2}}\left(\frac{1}{n}+K\right)\left(\sum_{x,y\in V}b(x,y)(\sigma(x,y))^2(|u(x)|^2+|u(y)|^2)\right)^{1/2}\nonumber\\
&= 2\left(\frac{1}{n}+K\right)\left(\sum_{x,y\in V}b(x,y)(\sigma(x,y))^2|u(x)|^2\right)^{1/2}\nonumber\\
&\leq 2\left(\frac{1}{n}+K\right)\left(\sum_{x\in V}\mu(x)|u(x)|^2\right)^{1/2}
\end{align}

By~(\ref{E:sum-I-show}) with $\phi=\phi_n$,~(\ref{E:u-sharp-sigma-bound-1}), and~(\ref{E:cut-off-phi-n-property}), we obtain
\begin{equation}\label{E:quadratic-inequality}
I_n^2\leq 4\|Hu\|\|u\|+4\|u\|^2+2\sqrt{2}{I_{n}}\left(\frac{1}{n}+K\right)\|u\|,
\end{equation}
for all $u\in\Dom(\hmax)$, where $I_{n}$ is as in~(\ref{E:sum-I}) with $\phi=\phi_n$.

Using the inequality $ab\leq \frac{a^2}{4}+{b^2}$ with $a=\sqrt{2}I_{n}$ in the third term on the right-hand side of~(\ref{E:quadratic-inequality}) and rearranging, we obtain
\begin{equation}\label{E:quadratic-inequality-2}
I_n^2\leq 8\left(\|Hu\|\|u\|+\left(\left(\frac{1}{n}+K\right)^2+1\right)\|u\|^2\right).
\end{equation}

Letting $n\to\infty$ in~(\ref{E:quadratic-inequality-2}) and using~(\ref{E:cut-off-phi-n-limit}) together with Fatou's lemma, we get

\begin{align}\label{E:quadratic-inequality-3}
&\sum_{x,y\in V}b(x,y)|u(x)-e^{i\theta_{x,y}}u(y)|^2(q^{-1}(x)+q^{-1}(y))\nonumber\\
&\leq 8\left(\|Hu\|\|u\|+({K^2}+1)\|u\|^2\right).
\end{align}
Since
\[
2\min\{q^{-1}(x),q^{-1}(y)\}\leq q^{-1}(x)+q^{-1}(y),\qquad\textrm{for all }x,\,y\in V,
\]
the inequality~(\ref{E:info-dom-hmax}) follows directly from~(\ref{E:quadratic-inequality-3}). $\hfill\square$

\medskip

\noindent\textbf{Continuation of the proof of Theorem~\ref{T:main-3}.} Our final goal is to prove~(\ref{E:h-max-symmetric}). Let $d_{\sigma_{q}}$ be as in the hypothesis of Theorem~\ref{T:main-3}. Fix $x_0\in V$ and define
\begin{equation}\label{E:function-P}
    P(x):=d_{\sigma_{q}}(x_0,x),\qquad x\in V.
\end{equation}

In what follows, for a function $f\colon V\to\mathbb{R}$ we define $f^{+}(x):=\max\{f(x),0\}$. Let $u\,,v\in\Dom(\hmax)$, let $s>0$, and define
\begin{equation}\label{E:sum-J-s}
J_{s}:=\sum_{x\in V}\left(1-\frac{P(x)}{s}\right)^{+}\left((Hu)(x)\overline{v(x)}-u(x)\overline{(Hv)(x)}\right)\mu(x),
\end{equation}
where $P$ is as in~(\ref{E:function-P}) and $H$ is as in~(\ref{E:magnetic-schro}).

Since $(V,d_{\sigma_{q}})$ is a complete metric space, by~\cite[Theorem A.1]{HKMW} it follows that the set
\[
U_{s}:=\{x\in V\colon P(x)\leq s\}
\]
is finite. Thus, for all $s>0$, the summation in~(\ref{E:sum-J-s}) is performed over finitely many vertices.

The following lemma follows easily from the definition of $J_{s}$ and the dominated convergence theorem; see the proof of~\cite[Lemma 13]{Milatovic-12} for details.

\begin{lem}\label{L:limit-J-s} Let $J_{s}$ be as in~(\ref{E:sum-J-s}). Then
\begin{equation}\label{E:limit-J-s-symmetric}
\lim_{s\to+\infty}J_{s}=(Hu,v)-(u,Hv).
\end{equation}
\end{lem}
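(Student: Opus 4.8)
The plan is to recognize $J_s$ as a truncation of an absolutely convergent series that already represents the right-hand side, and then to pass to the limit by dominated convergence. First I would introduce the summand
\[
g(x):=\mu(x)\bigl((Hu)(x)\overline{v(x)}-u(x)\overline{(Hv)(x)}\bigr),\qquad x\in V,
\]
so that, by the definition~\eqref{E:sum-J-s}, one has $J_s=\sum_{x\in V}\bigl(1-P(x)/s\bigr)^{+}g(x)$, with the weight $\bigl(1-P(x)/s\bigr)^{+}$ taking values in $[0,1]$.

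The central point is that $\sum_{x\in V}|g(x)|<\infty$. Since $u,v\in\Dom(\hmax)=\mathcal{D}$, we have $u,v\in\lw$ and also $Hu=\hmax u\in\lw$, $Hv=\hmax v\in\lw$. Applying the Cauchy--Schwarz inequality in $\lw$ gives
\[
\sum_{x\in V}\mu(x)\,|(Hu)(x)|\,|v(x)|\leq \|Hu\|\,\|v\|<\infty,
\qquad
\sum_{x\in V}\mu(x)\,|u(x)|\,|(Hv)(x)|\leq \|u\|\,\|Hv\|<\infty,
\]
so $\sum_{x\in V}|g(x)|<\infty$. Consequently $\sum_{x\in V}g(x)$ converges absolutely, and splitting the sum and comparing with the inner product~\eqref{E:inner-w} identifies it precisely as
\[
\sum_{x\in V}g(x)=(Hu,v)-(u,Hv).
\]

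It then remains to pass the limit $s\to+\infty$ through the sum. For each fixed $x\in V$ the quantity $P(x)=d_{\sigma_{q}}(x_0,x)$ is a fixed nonnegative number, hence $\bigl(1-P(x)/s\bigr)^{+}\to 1$ as $s\to+\infty$. Moreover $|(1-P(x)/s)^{+}g(x)|\leq |g(x)|$ for every $s>0$, and $|g|$ is summable and independent of $s$. The dominated convergence theorem, applied to the series over $V$, therefore yields
\[
\lim_{s\to+\infty}J_s=\sum_{x\in V}g(x)=(Hu,v)-(u,Hv),
\]
which is~\eqref{E:limit-J-s-symmetric}.

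The only substantive step is the absolute summability, and it is not really an obstacle so much as the one hypothesis that must be used: the Cauchy--Schwarz bounds require $Hu,Hv\in\lw$, which is exactly the content of $u,v\in\Dom(\hmax)$. Everything else is the routine observation that the truncating weights are bounded by $1$ and converge pointwise to $1$, so that dominated convergence (with dominating function $|g|$) applies verbatim on the countable index set $V$.
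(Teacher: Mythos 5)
Your proof is correct and follows exactly the route the paper indicates (dominated convergence applied to the sum, with summability of the dominating function coming from Cauchy--Schwarz and the fact that $u,v,Hu,Hv\in\lw$). Nothing further is needed.
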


In what follows, for $u\in\Dom(\hmax)$, define
\begin{equation}\label{E:T-u-def}
T_u:=\left(\sum_{x,y\in V}b(x,y)\min\{q^{-1}(x),q^{-1}(y)\}|u(x)-e^{i\theta_{x,y}}u(y)|^2\right)^{1/2}.
\end{equation}
Note that $T_{u}$ is finite by Proposition~\ref{P:info-dom-hmax}.

\begin{lem}\label{L:J-s-bound} Let $u,\, v\in\Dom(\hmax)$, let $T_{u}$ and $T_v$ be as in~(\ref{E:T-u-def}), and let $J_{s}$ be as in~(\ref{E:sum-J-s}). Then
\begin{align}\label{E:J-s-bound}
|J_s| \leq \frac{1}{2s}(\|v\|T_{u}+\|u\|T_{v}).
\end{align}
\end{lem}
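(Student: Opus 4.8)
The goal is to bound $|J_s|$ in terms of $T_u$ and $T_v$. The key observation is that the definition of $J_s$ in~(\ref{E:sum-J-s}) involves the operator $H$ applied to $u$ and $v$, and we want to convert the ``pointwise'' expression $(Hu)(x)\overline{v(x)}-u(x)\overline{(Hv)(x)}$ into an ``edge-based'' expression suitable for Cauchy--Schwarz. I would begin by substituting the explicit formula~(\ref{E:magnetic-schro}) for $Hu$ and $Hv$ into $J_s$. Since $W$ is real-valued, the potential terms $W u\overline{v}-u\overline{Wv}=W(u\overline{v}-u\overline{v})=0$ cancel, so only the magnetic Laplacian part $\delswa$ survives. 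Writing $\eta_s(x):=\left(1-\frac{P(x)}{s}\right)^{+}$, this reduces $J_s$ to a double sum over edges involving $b(x,y)$, the phase factors $e^{i\theta_{x,y}}$, and the differences of $u$ and $v$.

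\textbf{The symmetrization step.} The main manipulation is a discrete ``summation by parts'': rewrite the double sum symmetrically in $x$ and $y$. Using $b(x,y)=b(y,x)$, $\theta(x,y)=-\theta(y,x)$, and relabeling indices, I expect the cross terms to reorganize so that the factor $\eta_s(x)$ gets replaced by the \emph{difference} $\eta_s(x)-\eta_s(y)$. Concretely, the heuristic is that
\[
J_s=-\tfrac{1}{2}\sum_{x,y\in V}b(x,y)(\eta_s(x)-\eta_s(y))\left[(u(x)-e^{i\theta_{x,y}}u(y))\overline{v(x)}-\overline{(v(x)-e^{i\theta_{x,y}}v(y))}\,u(x)\right]
\]
or a closely related symmetric form, after which each bracketed term is a product of an edge-difference of $u$ (resp.\ $v$) with a bounded factor. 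This is the step where the structure $|u(x)-e^{i\theta_{x,y}}u(y)|$ appearing in $T_u$ emerges naturally.

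\textbf{Estimating the gradient of $\eta_s$ and applying Cauchy--Schwarz.} Once $J_s$ is in edge-difference form, I would bound $|\eta_s(x)-\eta_s(y)|$. Since $\eta_s$ is a $1/s$-Lipschitz truncation of $P(x)=d_{\sigma_q}(x_0,x)$, and $P$ is $1$-Lipschitz with respect to $\sigma_q$, we get $|\eta_s(x)-\eta_s(y)|\leq \frac{1}{s}\sigma_q(x,y)$ for $x\sim y$. Recalling $\sigma_q(x,y)=\min\{q^{-1/2}(x),q^{-1/2}(y)\}\,\sigma(x,y)$ from~(\ref{E:def-sigma-q}), the weight that appears is exactly $\min\{q^{-1/2}(x),q^{-1/2}(y)\}$ paired with $\sigma(x,y)$. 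I would then split each product and apply Cauchy--Schwarz on the edge sum: one factor collects $b(x,y)\min\{q^{-1}(x),q^{-1}(y)\}|u(x)-e^{i\theta_{x,y}}u(y)|^2$, which is exactly $T_u^2$, while the other factor collects $b(x,y)(\sigma(x,y))^2|v(x)|^2$ (or similar), which is controlled by $\|v\|^2$ via the strongly-intrinsic condition Definition~\ref{D:intrinsic}(ii). The symmetric term yields $\|u\|\,T_v$. Combining and tracking the constant $\frac{1}{2s}$ produces~(\ref{E:J-s-bound}).

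\textbf{Main obstacle.} The delicate point is the bookkeeping in the symmetrization step: correctly pairing the phase factors $e^{i\theta_{x,y}}$ so that each term genuinely factors as (edge-difference of $u$)$\times$(bounded quantity involving $v$) rather than leaving an irreducible mixed term, and ensuring the conjugations are handled so the real/imaginary cancellations go through. A secondary care point is verifying that the Lipschitz bound on $\eta_s$ transfers the correct $q^{-1/2}$ weight onto the right factor so that the Cauchy--Schwarz split lands precisely on $T_u$ and $T_v$ as defined in~(\ref{E:T-u-def}), with no mismatched powers of $q$.
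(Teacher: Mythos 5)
Your proposal is correct and follows essentially the same route as the paper's proof: the potential terms cancel, the double sum is symmetrized so that $\left(1-P(x)/s\right)^{+}$ is replaced by the difference $\left(1-P(x)/s\right)^{+}-\left(1-P(y)/s\right)^{+}$, this difference is bounded by $\tfrac{1}{s}\,\sigma_{q}(x,y)$ via the $1/s$-Lipschitz property of $t\mapsto(1-t/s)^{+}$ together with $|P(x)-P(y)|\le d_{\sigma_{q}}(x,y)\le\sigma_{q}(x,y)$, and the Cauchy--Schwarz split combined with the strongly intrinsic condition lands exactly on $\tfrac{1}{2s}(\|v\|T_{u}+\|u\|T_{v})$. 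The only blemish is an overall sign in your displayed identity for $J_s$ (your bracket equals the paper's integrand with the opposite sign, so as written it would give $J_s=-J_s$), but since the final estimate is on $|J_s|$ this is immaterial.
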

\begin{proof}
A computation shows that
\begin{align*}\label{E:J-s-rewrite-final}
2J_{s}=&\sum_{x,y\in V}\left(\left(1-P(x)/s\right)^{+}-\left(1-P(y)/s\right)^{+}\right)
b(x,y)\left((e^{-i\theta_{x,y}}\overline{v(y)}-\overline{v(x)})u(x)\right.\nonumber\\
&\left.-(e^{i\theta_{x,y}}u(y)-u(x))\overline{v(x)}\right),
\end{align*}
which, together with the triangle inequality and property
\[
|f^{+}(x)-g^{+}(x)|\leq|f(x)-g(x)|,
\]
leads to the following estimate:

\begin{align}\label{E:J-s-pre-final}
2|J_{s}|\leq &\frac{1}{s} \sum_{x,y\in V}b(x,y)|P(x)-P(y)|\left(|e^{i\theta_{x,y}}v(y)-v(x)||u(x)|\right.\nonumber\\
&\left.+|e^{i\theta_{x,y}}u(y)-u(x)||v(x)|\right).
\end{align}

By~(\ref{E:function-P}) and~(\ref{E:def-sigma-q}), for all $x\sim y$ we have
\begin{align}\label{E:distances-P-d}
&|P(x)-P(y)|\leq d_{\sigma_{q}}(x,y)\leq \sigma_{q}(x,y)=\min\{q^{-1/2}(x),q^{-1/2}(y)\}\cdot\sigma(x,y).
\end{align}
To obtain~(\ref{E:J-s-bound}), we combine (\ref{E:J-s-pre-final}) and~(\ref{E:distances-P-d}) and use Cauchy--Schwarz inequality together with Definition~\ref{D:intrinsic}(ii).
\end{proof}

\noindent\textbf{The end of the proof of Theorem~\ref{T:main-3}.} Let $u\in\Dom(\hmax)$ and $v\in\Dom(\hmax)$. By the definition of $\hmax$, it follows that $Hu\in\lw$ and $Hv\in\lw$. Letting $s\to+\infty$ in~(\ref{E:J-s-bound}) and using the finiteness of $T_{u}$ and $T_{v}$, it follows that $J_{s}\to 0$ as $s\to+\infty$. This, together with~(\ref{E:limit-J-s-symmetric}), shows~(\ref{E:h-max-symmetric}). $\hfill\square$

\section{Examples}
In this section we give some examples that illustrate the main results of the paper. In what follows, for $x\in\mathbb{R}$, the notation $\lceil{x}\rceil$ denotes the smallest integer $N$ such that $N\geq x$. Additionally, $\lfloor{x}\rfloor$ denotes the greatest integer $N$ such that $N\leq x$.

\begin{ex}\label{E:ex-2-1}
\emph{In this example we consider the graph $G=(V,E)$ whose vertices $x_{j,k}$ are arranged in a ``triangular" pattern so that the first row contains $x_{1,1}$; for $2\leq j\leq 4$, the $j$-th row contains $x_{j,1}$ and $x_{j,2}$; for $5\leq j\leq 9$, the $j$-th row contains $x_{j,1}$, $x_{j,2}$, and $x_{j,3}$; for $10\leq j\leq 16$, the $j$-th row contains $x_{j,1}$, $x_{j,2}$, $x_{j,3}$, and $x_{j,4}$; and so on. There are two types of edges in the graph: (i) for every $j\geq 1$, we have $x_{j,1}\sim x_{j+1,k}$ for all $1\leq k\leq \lceil{(j+1)^{1/2}}\rceil$; (ii) for every $j\geq 2$, we have the ``horizontal" edges $x_{j,k}\sim x_{j,k+1}$, for all $1\leq k\leq \lceil{j^{1/2}}\rceil-1$. Clearly, $G$ does not have a bounded vertex degree.}

\emph{Let $T=(V_T,E_T)$ be the subgraph of $G$ whose set of edges $E_T$ consists of type-(i) edges of $G$ described above. Note that $T$ is a spanning tree of $G$. Additionally, note that for every type-(ii) edge $e$ of $G$ the following are true: (i) $e\notin E_T$ and (ii) there is a unique $3$-cycle (a cycle with 3 vertices) that contains $e$. Thus, by~\cite[Lemma 2.2]{vtt-11-3}, the corresponding $3$-cycles, which we enumerate by $\{C_l\}_{l\in\mathbb{Z}_{+}}$, form a basis for the space of cycles of $G$. Furthermore, by Definition~\ref{D:def-good-cov}, the family $\{C_l=(V_l, E_l)\}_{l\in\mathbb{Z}_{+}}$ is a good covering of degree $m=2$ of $G$.  Following~\cite[Proposition 2.4(i)]{vtt-11-3} and~\cite[Lemma 2.9]{vtt-11-3}, we define the phase function $\theta \colon V_{l}\times V_{l}\to [-\pi,\pi]$ satisfying the following properties: (i) if an edge $\{x,y\}$ belongs to $E_l\backslash E_T$, we have $\theta(x,y)=-\theta(y,x)$; (ii) if $\{x,y\}\in E_{T}$, we have $\theta(x,y)=0$; and (iii) $p_{l}=|1-e^{i\pi/3}|^2=1$, where $p_l$ is as in~(\ref{E:QW}) with $G_l$ replaced by $C_l$.}

\emph{With this choice of $p_l$ and using the good covering $\{C_l\}_{l\in\mathbb{Z}_{+}}$ of degree $m=2$, the definition of the effective potential~(\ref{E:QW}) simplifies to
\begin{equation}\label{E:QW-simple}
 W_{e}(x):= \frac{1}{2}  \sum _{\{ l\in L\, |\, x\in V_l \}}
\inf _{\{y,z\} \in E_l}b(y,z).
\end{equation}
Let $\{b_j\}_{j\in\mathbb{Z}_{+}}$ be an increasing sequence of positive numbers. We define (i) $b(x,y)=b_j$ if $x\sim y$ and $x$ is in the $j$-th row and $y$ is in the $(j+1)$-st row; (ii) $b(x,y)=b_j$ if $x\sim y$ and $x$ and $y$ are both in the $(j+1)$-st row; (iii) $b(x,y)=0$, otherwise. With this choice of $b(x,y)$, we have $W_{e}(x_{1,1})=b_1/{2}$. Additionally, since $b_j$ is an increasing sequence of positive numbers, using~(\ref{E:QW-simple}) it is easy to see that if a vertex $x$ is in the $j$-th row, then
\begin{equation}\label{E:minorant-eff}
W_{e}(x)\geq \frac{1}{2}b_{j-1},\quad \textrm{for all }j\geq 2.
\end{equation}
Let $0<\beta<3/4$, and set $\mu(x):=j^{-2\beta}$ if the vertex $x$ is in the $j$-th row. Let $\alpha>0$ satisfy $\alpha+2\beta>3/2$, and set $b_j:=j^{\alpha}$, for all $j\in \mathbb{Z}_{+}$.   With this choice of $b(x,y)$ and $\mu(x)$,  let $\sigma_{1}(x,y)$ be as in~(\ref{E:ex-intrinsic}) and let $d_{\sigma_1}$ be the intrinsic path metric associated with $\sigma_1$ as in Section~\ref{SS:intrinsic-metric}. As there are $\lfloor{\sqrt{j}}\rfloor +3$ edges departing from  the vertex $x_{j,1}$, we have
\begin{equation}\label{E:sigma-1}
\sigma_{1}(x_{j,1};\,x_{j+1,1})=j^{-\alpha/2}(j+1)^{-\beta}(\lfloor{\sqrt{j+1}}\rfloor+3)^{-1/2},\qquad\textrm{for all }j\in \mathbb{Z}_{+}.
\end{equation}}
\emph{Additionally, note that the path $\gamma=(x_{1,1};\,x_{2,1};\, x_{3,1};\dots)$ is a geodesic with respect to the path metric $d_{\sigma_1}$, that is,
$d_{\sigma_1}(x_{1,1};\,x_{j,1})=l_{\sigma_{1}}(x_{1,1};\,x_{2,1};\dots;\,x_{j,1})$ for all $j\in\mathbb{Z}_{+}$, where $l_{\sigma_{1}}$ is as in~(\ref{E:l-sigma-def}). Since $\alpha+2\beta>3/2$, it follows that
\[
\sum_{j=1}^{\infty}j^{-\alpha/2}(j+1)^{-\beta}(\lfloor{\sqrt{j+1}}\rfloor+3)^{-1/2}<\infty;
\]
hence, by~\cite[Theorem A.1]{HKMW} the space $(V,d_{\sigma_1})$ is not metrically complete. Let $D(x)$ be as in~(\ref{E:dist-boundary}) corresponding to $d_{\sigma_1}$. If a vertex $x$ is in the $n$-th row, using~(\ref{E:sigma-1}) and
\[
\lfloor{\sqrt{j+1}}\rfloor+3\leq 3\sqrt{j+1},\qquad\textrm{for all }j\in\mathbb{Z}_{+},
\]
we have
\[
D(x)\geq \frac{1}{\sqrt{3}}\sum_{k=n}^{\infty}(j+1)^{-\beta-\alpha/2-1/4}\geq \frac{(n+1)^{-\beta-\alpha/2+3/4}}{\sqrt{3}(\beta+\alpha/2-3/4)},
\]
which leads to
\begin{equation}\label{E:D-x-estimate}
\frac{1}{2{D(x)}^2}\leq \frac{3(4\beta+2\alpha-3)^2(n+1)^{2\beta+\alpha-3/2}}{32},
\end{equation}
for all vertices $x$ in the $n$-th row, where $n\geq 1$. Define $W(x)=-n^{2\beta+\alpha-3/2}$ for all vertices $x$ in the $n$-th row, where $n\geq 1$. Using~(\ref{E:minorant-eff}) and $W_{e}(x_{1,1})=b_1/2$, together with~(\ref{E:D-x-estimate}) and the assumption $0<\beta<3/4$, it follows that there exists a constant $C>0$ (depending on $\alpha$ and $\beta$) such that~(\ref{E:potential-minorant-effective}) is satisfied. Thus, by Theorem~\ref{T:main-2} the operator $\delswa+W$ is essentially self-adjoint on $\vcomp$. Clearly, Theorem~\ref{T:main-2} is also applicable in the case $W(x)=0$ for all $x\in V$, that is, the operator $\delswa$ is essentially self-adjoint on $\vcomp$. A calculation shows that $\mu$ and $b$ in this example do not satisfy~\cite[Assumption A]{Milatovic-11}; hence, we cannot use~\cite[Theorem 1.2]{Milatovic-11}.}

\emph{We will now show that under more restrictive assumption $1/2<\beta<3/4$, we cannot apply~\cite[Proposition 2.2]{Golenia-11} to this example with $W(x)\equiv 0$. To see this, using~(\ref{E:Deg-def})  and the fact that among the $\lfloor\sqrt{j}\rfloor +3$ edges departing from  the
vertex $x_{j,1}$, there are  $\lfloor\sqrt{j}\rfloor +1$
edges with weight $b_j$ and 2 edges with weight $b_{j-1}$, we first note that
\[
\operatorname{Deg}(x_{1,1})=2,\quad \operatorname{Deg}(x_{j,1})=j^{2\beta}((\lfloor{\sqrt{j}}\rfloor+1)j^{\alpha}+2(j-1)^{\alpha}),\quad\textrm{for all } j\geq 2.
\]
Let $\lambda\in\mathbb{R}$ be such that~(\ref{E:g-condition-1}) is satisfied, with $W(x)\equiv 0$. Let $\delta>0$ and let $a_n$ be as in~(\ref{E:g-condition-2}) corresponding to the path $\gamma=(x_{1,1};\,x_{2,1};\, x_{3,1};\dots)$, the potential $W\equiv 0$, $\delta>0$, and $\lambda$. Then $a_1=1$,
\[
(a_2)^2=\left(\frac{\delta}{2}+\left|1+\frac{\lambda}{2}\right|\right)^2=\frac{(\delta+|2+\lambda|)^2}{4},
\]
and
\begin{eqnarray*}
(a_n)^2&=&\frac{(\delta+|2+\lambda|)^2}{4} \prod_{j=2}^{n-1} \left(\frac{\delta+|j^{\alpha+2\beta}(\lfloor{\sqrt{j}}\rfloor+1)+2j^{2\beta}(j-1)^{\alpha}+\lambda|}{j^{\alpha+2\beta}(\lfloor{\sqrt{j}}\rfloor+1)
+2j^{2\beta}(j-1)^{\alpha}}\right)^2,\qquad n\geq 3.
\end{eqnarray*}
Therefore,
\[
\sum_{n=1}^{\infty}(a_{n})^2\mu(x_{n,1})= 1+\frac{(\delta+|2+\lambda|)^2}{4(2)^{2\beta}}+\sum_{n=3}^{\infty}\frac{(a_n)^2}{n^{2\beta}}.
\]
Using Raabe's test, it can be checked that the series on the right hand side of this equality converges. (Here, we used the more restrictive assumption $1/2<\beta<3/4$.) Hence, looking at~(\ref{E:g-condition-2}), we see that~\cite[Proposition 2.2]{Golenia-11} cannot be used in this example.}
\end{ex}

\begin{ex}\label{E:ex-3} \emph{Consider the graph whose vertices are arranged in a ``triangular" pattern so that $x_{1,1}$ is in the first row, $x_{2,1}$ and $x_{2,2}$ are in the second row,  $x_{3,1}$, $x_{3,2}$, and $x_{3,3}$ are in the third row, and so on.  The vertex $x_{1,1}$ is connected to $x_{2,1}$ and $x_{2,2}$. The vertex $x_{2,i}$, where $i=1,2$, is connected to every vertex $x_{3,j}$, where $j=1,2,3$. The pattern continues so that each of $k$ vertices in the $k$-th row is connected to each of $k+1$ vertices in the $(k+1)$-st row. Note that for all $k\geq 1$ and $j\geq 1$ we have $\textrm{deg}(x_{k,j})=2k$, where $\textrm{deg}(x)$ is as in~(\ref{E:ex-intrinsic}). Let $\mu(x)=k^{1/2}$ for every vertex $x$ in the $k$-th row, and let $b(x,y)\equiv 1$ for all vertices $x\sim y$. Following~(\ref{E:ex-intrinsic}), for every vertex $x$ in the $k$-th row and every vertex $y$ in the $(k+1)$-st row, define
\[
\sigma(x,y):=\min\left\{\frac{k^{1/2}}{2k},\frac{(k+1)^{1/2}}{2(k+1)}\right\}^{1/2}=2^{-1/2}(k+1)^{-1/4}.
\]
For all vertices $x$ in the $k$-th row, define $W(x)=-2k^{1/2}$ and $q(x)=2k$. Clearly, the inequality~(\ref{E:W-minorant-1}) is satisfied.
With this choice of $q$, following~(\ref{E:def-sigma-q}), for every vertex $x$ in the $k$-th row and every vertex $y$ in the $(k+1)$-st row, define
\[
\sigma_{q}(x,y):=\min\{(2k)^{-1/2},(2(k+1))^{-1/2}\}\cdot\sigma(x,y)=2^{-1}(k+1)^{-3/4}.
\]
Since
\[
\sum_{j=1}^{\infty}2^{-1}(j+1)^{-3/4}=\infty,
\]
by~\cite[Theorem A.1]{HKMW} it follows that the space $(V,d_{\sigma_q})$ is metrically complete. Additionally, it is easily checked that~(\ref{E:q-1-2-lipschitz}) is satisfied with $K=1$. Therefore, by Theorem~\ref{T:main-3} the operator $\Delta_{b,\mu}+W$ is essentially self-adjoint on $\vcomp$. Furthermore, it is easy to see that for every $c\in\mathbb{R}$, there exists a function $u\in\vcomp$ such that the inequality
\begin{equation}\nonumber
((\Delta_{b,\mu}+W) u,u)\geq c\|u\|^2
\end{equation}
is not satisfied. Thus, the operator $\Delta_{b,\mu}+W$ is not semi-bounded from below, and we cannot use~\cite[Theorem 1.2]{Milatovic-11}.}

\emph{It turns out that~\cite[Proposition 2.2]{Golenia-11} is not applicable in this example. To see this, using~(\ref{E:Deg-def}) we first note that $\operatorname{Deg}(x_{k,j})=2k^{1/2}$, for all $k\geq 1$ and all $j\geq 1$. Let $\lambda\in\mathbb{R}$ be such that~(\ref{E:g-condition-1}) is satisfied, with $W$ as in this example. Let $a_n$ be as in~(\ref{E:g-condition-2}) corresponding to the path $\gamma=(x_{1,1};\,x_{2,1};\, x_{3,1};\dots)$, the potential $W(x_{k,1})=-2{k}^{1/2}$, $\delta>0$, and $\lambda$. Then $a_1=1$, and for $n\geq 2$ we have
\begin{equation}\nonumber
(a_n)^2=\prod_{k=1}^{n-1}\left(\frac{\delta}{2k^{1/2}}+\left|1+\frac{\lambda-2k^{1/2}}{2k^{1/2}}\right|\right)^2=
\prod_{k=1}^{n-1}\frac{(\delta+|\lambda|)^2}{4k}=\frac{(\delta+|\lambda|)^{2n-2}}{4^{n-1}(n-1)!}.
\end{equation}
Therefore,
\[
\sum_{n=1}^{\infty}(a_{n})^2\mu(x_{n,1})= 1+\sum_{n=2}^{\infty}\frac{\sqrt{n}\cdot(\delta+|\lambda|)^{2n-2}}{4^{n-1}(n-1)!}.
\]
Using ratio test, it can be checked that the series on the right hand side of this equality converges. Hence, looking at~(\ref{E:g-condition-2}), we see that~\cite[Proposition 2.2]{Golenia-11} cannot be used in this example.}
\end{ex}

\subsection*{Acknowledgment} The second author is grateful to Daniel Lenz for fruitful discussions.

\end{document}